\newtheoremstyle{mythm}{3pt}{3pt}{}{16pt}{\bfseries}{:}{.5em}{}
\renewcommand{\arraystretch}{1.2}
\theoremstyle{mythm}
\newtheorem{theorem}{Theorem}
\newtheorem{example}{Example}
\newtheorem{definition}{Definition}
\newtheorem{remark}{Remark}
\newtheorem{lemma}{Lemma}
\newtheorem{construction}{Construction}
\begin{document}
\title{Coded Caching for Combination Networks with Multiaccess
\author{Leitang Huang, Jinyu Wang and Minquan Cheng}
\thanks{Huang, Wang and Cheng are with Guang xi Key Lab of Multi-source Information Mining $\&$ Security, Guangxi Normal University,
Guilin 541004, China (e-mail: $\{$hlt11153,jjiang2008,chengqinshi$\}$@hotmail.com). Wang is also with College of Mathematics and Statistics, Guangxi Normal University, Guilin 541004, China.}
}

\date{}
\maketitle
\begin{abstract}
In a traditional $(H, r)$ combination network, each user is connected to a unique set of $r$ relays. However, few research efforts to consider $(H, r, u)$ multiaccess combination network problem where each $u$ users are connected to a unique set of $r$ relays. A naive strategy to obtain a coded caching scheme for $(H, r, u)$ multiaccess combination network is by $u$ times repeated application of a coded caching scheme for a traditional $(H, r)$ combination network. Obviously, the transmission load for each relay of this trivial scheme is exactly $u$ times that of the original scheme, which implies that as the number of users multiplies, the transmission load for each relay will also multiply. Therefore, it is very meaningful to design a coded caching scheme for $(H, r, u)$ multiaccess combination network with lower transmission load for each relay.
In this paper, by directly applying the well known coding method (proposed by Zewail and Yener) for $(H, r)$ combination network, a coded caching scheme (ZY scheme) for $(H, r, u)$ multiaccess combination network is obtained. However, the subpacketization of this scheme has exponential order with the number of users, which leads to a high implementation complexity. In order to reduce the subpacketization, a direct construction of a coded caching scheme for $(H, r, u)$ multiaccess combination network is proposed by means of Combinational Design Theory, where the parameter $u$ must be a combinatorial number. For arbitrary parameter $u$, a hybrid construction of a coded caching scheme for $(H, r, u)$ multiaccess combination network is proposed based on our direct construction. Theoretical and numerical analysis show that our last two schemes have smaller transmission load for each relay compared with the trivial scheme, and have much lower subpacketization compared with ZY scheme.
 \end{abstract}

\begin{IEEEkeywords}
Coded caching, Placement delivery array, Combination networks, Multiaccess.
\end{IEEEkeywords}
\section{Introduction}
\label{introduction}

Coded caching proposed by Maddah-Ali and Niesen(MN) in \cite{MN} not
only utilizes the users' cache memories in shifting some of the network traffic to off-peak hours, but also creates multicast
opportunities that further reduce network congestion during peak traffic hours. The first coded caching system studied is the following shared-link broadcast network with end-user-caches: there exists a single sever with $N$ files with the same length connecting to $K$ users through a shared error-free broadcast link, where each user has a cache of size $M$ files. An $F$-division $(K,M,N)$ coded caching scheme consists of two phases: placement phase during the off-peak traffic times and delivery phase during peak traffic times. In the placement phase, the server divides each file into $F$ packets ($F$ is referred to as the subpacketization) with the same length, and then places some packets of size $M$ files in each user's cache without knowledge of later users' demands. If the server directly places some packets in each user's cache, without coding, the placement is said to be uncoded. Otherwise, the placement is called coded.
In the delivery phase, each user's demand arrives at the server. According to the users' demands, the server transmits coded packets (XOR of the required packets) in order to satisfy all users' demands with the help of their caches. The maximum normalized transmission amount over all possible demands is defined as the transmission load $R$, and the coded caching gain is defined as $\frac{K(1-\frac{M}{N})}{R}$, where $K(1-\frac{M}{N})$ is the transmission load of the conventional uncoded caching scheme.
The goal is to design a coded caching scheme with transmission load $R$ and the subpacketization $F$ as small as possible due to the requirements of efficiency of transmission and low implementation complexity. The above coded caching model is called a $(K,M,N)$ caching or MN caching system, for which Maddah-Ali and Niesen \cite{MN} proposed the first coded caching scheme (called MN scheme), which is optimal under the constraint of uncoded placement and $K\leq N$ \cite{WTP2020}. Yan et al. \cite{YCTC} proposed a combinatorial structure called placement delivery array (PDA) to design a coded caching scheme for MN caching system. It is worth noting that MN scheme can be represented by a special PDA which is called the MN PDA.

\subsection{Traditional combination network caching system}
\label{sub-traditional-caching}
Compared with the MN caching system, a more practical system, where users may communicate with the server through intermediate relays, has gained attention. Since the analysis of relay networks with arbitrary topologies is challenging, many studies focus on a symmetric version of this general problem known as combination networks with end-user-caches. The $(H,r,M,N)$ combination network caching system was first proposed in \cite{JWTL} as follows:
a server with $N$ files connects $H$ relays (without caches), which in turns connect $K={H\choose r}$ users, where each user is equipped with a cache of size $M$ files and each user is connected to a unique set of $r$ relays. All links are assumed to be error-free and interference-free. The transmission load for each relay is defined as the maximum normalized transmission amount from the server to each relay over all relays and all possible demands. The objective is to design a coded caching scheme with the transmission load for each relay and the subpacketization as small as possible.

The first two schemes for combination networks were proposed in \cite{JWTL}, one based on uncoded placement and routing in the delivery phase, and the other based on the placement strategy of MN scheme and a linear code for delivery. The authors in \cite{ZY2} proposed a scheme based on coded placement that effectively splits the combination network into $H$ parallel MN caching systems, each of which serves ${H-1\choose r-1}$ users by using MN scheme.
This scheme works for any parameters $H$, $r$ and memory ratio $\frac{M}{N}$ and the achieved transmission load for each relay is approximately $\frac{1}{r}$ of the transmission load of MN scheme. Furthermore, it can be used to deal with any relay networks with arbitrary topologies.
There are some other studies on improving the transmission load for each relay based on MN scheme, for instance, \cite{WJPT1,WJPT3,WJPT4,WTPJ,WTJP,EMNG} etc. Since the subpacketization of MN scheme increases exponentially with the number of users, all of the above known schemes have high subpacketization, which implies that these schemes not only have high implementation complexity but also can not be used when the files in server are not large enough.

In order to design a scheme with low subpacketization for the combination network caching system, the authors in \cite{YWY} proposed the concept of combinatorial placement delivery array (CPDA), which is an expansion of PDA. When $r$ divides $H$ (denoted by $r|H$), based on MN scheme, they proposed a scheme with lower subpacketization while the achieved transmission load for each relay is equal to that of the scheme in \cite{ZY2}. However, its subpacketization still increases exponentially with the number of users. The authors in \cite{CLZW} showed that the strongly coloring PDA in \cite{YTCC} is a CPDA, which is constructed by the strong coloring in bipartite graph \cite{JA}. Consequently, two classes of CPDAs were obtained in \cite{CLZW}, which lead to schemes with lower subpacketization and without the limitation of $r|H$ compared with the scheme in \cite{YWY}.

\subsection{Multiaccess combination network caching system}
\label{sub-multiaccess-caching}
It should be noted that all of the above studies are under the assumption that each user is connected to a distinct set consisting of $r$ relays. However, the number of users in a network system may be time-varying\cite{ZZCC} in practice. So it always happens that multiple users are connected to the same $r$ relays when the number of users is larger than ${H\choose r}$. Therefore, it is meaningful to study a combination network caching system where multiple users are connected to the same $r$ relays. Assume that there are exactly $u$ users connected to a unique set of $r$ relays. Such setting is denoted by an $(H,r,u,M,N)$ multiaccess combination network caching system.

Clearly, an $(H,r,1,M,N)$ multiaccess combination network caching system is an $(H,r,M,N)$ combination network caching system proposed in \cite{JWTL}. So we can generate an $(H,r,u,M,N)$ multiaccess combination network caching scheme by repeated application of an $(H,r,M,N)$ combination network caching scheme $u$ times. However, in this naive strategy not only the coded caching gain among the users who are connected to the same $r$ relays will be omitted, but also the coded caching gain among the users who are connected to different relay subsets containing at least one common relay
will not be fully considered.
In order to make the transmission load for each relay as small as possible, we prefer to design a scheme with the coded caching gain as large as possible. This motivates us to study the multiaccess combination network caching problem, where the basic idea is to generate coded caching gain among all the users as large as possible instead of generating coded caching gain only among ${H\choose r}$ users who are connected to distinct sets of $r$ relays. Specifically, the following main results are obtained in this paper.

\begin{enumerate}
\item By using the idea proposed in \cite{ZY2}, a scheme (called ZY scheme) for the multiaccess combination network is obtained, which works for any parameters $H$, $r$, $u$ and any memory ratio $\frac{M}{N}$, and the achieved transmission load for each relay is approximately $\frac{1}{r}$ of the transmission load of MN scheme, since the coded caching gain among the users who are connected to the same relay is fully considered.  However, its subpacketizaiton increases exponentially with the number of users.
\item In order to reduce the subpacketization, we omit the coded caching gain among the users who are connected to the same $r$ relays and consider the coded caching gain among the users who are connected to different relay subsets containing at least one common relay as much as possible, and directly construct a CPDA $\mathbf{P}$ from the view point of Combinatorial Design Theory, which leads to a new scheme (called Scheme A) for the multiaccess combination network with lower subpacketization.
    Since some special combinatorial structure is used in this scheme, the parameter $u$ has to be a combinatorial number.
\item For arbitrary parameter $u$, a hybrid construction of CPDA is proposed based on the MN PDA and the above directly constructed CPDA $\mathbf{P}$, which leads to a new scheme (called Scheme B) for the multiaccess combination network. It is worth noting that Scheme B not only has the same coded caching gain among the users who are connected to different relay subsets containing at least one common relay as Scheme A, but also fully considers the coded caching gain among the users who are connected to the same $r$ relays by increasing some subpacketization.
\end{enumerate}

 The rest of this paper is organized as follows. In Section \ref{sec-pre}, the traditional combination network caching system is extended to the multiaccess combination network caching system, and the concepts of PDA and CPDA are briefly reviewed. In Section \ref{sec-first}, three schemes for the multiaccess combination network are proposed. In section \ref{sect-performance}, performance analysis is provided. Finally, conclusion is drawn in Section \ref{sec-conclusion} while some proofs are provided in the Appendices.

\section{preliminaries}
\label{sec-pre}
In this section, the multiaccess combination network caching system and the concepts of PDA and CPDA are introduced. First, the following notations are useful in this paper.
\begin{itemize}
  \item $[a:b]:=\{a,a+1,\ldots,b\}$, $[a]:=\{1,2,\ldots,a\}$.
  \item $mod(a,q)$ denotes the least non-negative residue of $a$ modulo $q$. $<a>_q:=mod(a,q)$ if $mod(a,q)\neq0$, otherwise $<a>_q:=q$.
  \item For any set $\mathcal{H}$ and for any positive integer $r$ with $r<|\mathcal{H}|$, ${\mathcal{H}\choose r}:=\{\mathcal{A}|\mathcal{A}\subseteq \mathcal{H},|\mathcal{A}|=r\}$, i.e., ${\mathcal{H}\choose r}$ is the collection of $r$-sized subsets of $\mathcal{H}$, where $|\cdot|$ denotes the cardinality of a set.
  \item Given an array $\mathbf{P}=(p_{j,k})_{j\in[F], k\in[K]}$ with alphabet $[S]\cup \{*\}$, we define $\mathbf{P}+a=(p_{j,k}+a)_{j\in[F], k\in[K]}$ and $\mathbf{P}\times a=(p_{j,k}\times a)_{j\in[F], k\in[K]}$  for any integer $a$, where $a+*=*, a\times *=*$.
  \item For any two vectors ${\bf x}$ and ${\bf y}$ with the same length, $d({\bf x},{\bf y})$ is the number of coordinates in which ${\bf x}$ and ${\bf y}$ differ, $wt({\bf x})$ is the weight of ${\bf x}$, i.e., the number of nonzero coordinates of ${\bf x}$. For example, if ${\bf x}=(0,1,2,1)$ and ${\bf y}=(0,1,3,4)$, then $d({\bf x},{\bf y})=2$ and $wt({\bf x})=3$.
  \item For any vector ${\bf f}$ with length $H$ and for any nonempty subset $\mathcal{T}\subseteq[H]$, ${\bf f}|_{\mathcal{T}}$ is a vector with length $|\mathcal{T}|$ obtained by taking only the coordinates with subscript $j \in \mathcal{T}$. For example, if ${\bf f}=(0,2,3,1)$ and $\mathcal{T}=\{1,3\}$, then ${\bf f}|_{\mathcal{T}}=(0,3)$.

\end{itemize}

\subsection{Placement Delivery Array}
Yan {\em et al.} \cite{YCTC} proposed a combinatorial structure, called placement delivery array (PDA), to characterize the placement phase and delivery phase of a scheme for the MN caching system simultaneously.
\begin{definition}(PDA,\cite{YCTC})
\label{def-PDA}
For positive integers $K$, $F$, $Z$ and $S$, an $F\times K$ array $\mathbf{P}=(p_{j,k})$, $j\in [F], k\in[K]$, composed of a specific symbol $"*"$ called star and $S$ integers in $[S]$, is called a $(K,F,Z,S)$ placement delivery array (PDA) if it satisfies the following conditions:
\begin{enumerate}
\item [C$1$.] Each column has exactly $Z$ stars.
\item [C$2$.] Each integer in $[S]$ occurs at lease once.
\item [C$3$.] For any two distinct entries $p_{j_1,k_1}$ and $p_{j_2,k_2}$, $p_{j_1,k_1}=p_{j_2,k_2}=s$ is an integer only if
  \begin{enumerate}
     \item [a.] $j_1\ne j_2$, $k_1\ne k_2$, i.e., they lie in distinct rows and distinct columns; and
     \item [b.] $p_{j_1,k_2}=p_{j_2,k_1}=*$, i.e., the corresponding $2\times 2$  subarray formed by rows $j_1,j_2$ and columns $k_1,k_2$ must be of the following form
  \begin{eqnarray}
  \label{form}
    \left(\begin{array}{cc}
      s & *\\
      * & s
    \end{array}\right)~\textrm{or}~
    \left(\begin{array}{cc}
      * & s\\
      s & *
    \end{array}\right).
  \end{eqnarray}
   \end{enumerate}
  \end{enumerate}
  \hfill $\square$
\end{definition}
\begin{example}\rm
\label{E-pda}
It is easy to verify that the following array is a $(6,4,2,4)$ PDA,
\begin{eqnarray*}
\mathbf{P}=\left(\begin{array}{cccccc}
*&*&*&1&2&3\\
*&1&2&*&*&4\\
1&*&3&*&4&*\\
2&3&*&4&*&*
\end{array}\right).
\end{eqnarray*}
\hfill $\square$
\end{example}
The first PDA was proposed by Maddah-Ali and Niesen in \cite{MN,YCTC}.
\begin{lemma}(MN PDA, MN scheme\cite{MN,YCTC})
\label{le-MN-PDA}
For any positive integers $K$, $M$ and $N$ with $M<N$, if $KM/N$ is an integer, there exists a $(K,{K\choose KM/N},{K-1\choose KM/N-1},{K\choose KM/N+1})$ PDA, which leads to a $(K,M,N)$ coded caching scheme (MN scheme) with subpacketization $F={K\choose KM/N}$ and transmission load $R=\frac{K(1-M/N)}{KM/N+1}$ for the MN caching system.
\hfill $\square$
\end{lemma}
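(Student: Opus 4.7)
The plan is to exhibit the MN PDA explicitly and then translate it into the caching scheme. Setting $t := KM/N$, I index the $F = \binom{K}{t}$ rows by the $t$-subsets $\mathcal{T} \in \binom{[K]}{t}$ and the $K$ columns by users $k \in [K]$. The entry in position $(\mathcal{T},k)$ is defined to be $*$ when $k \in \mathcal{T}$, and otherwise to be the integer label of the $(t+1)$-subset $\mathcal{T} \cup \{k\}$ under a fixed bijection from $\binom{[K]}{t+1}$ onto $[\binom{K}{t+1}]$.

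Next I would verify conditions C1--C3. For C1, the stars in column $k$ correspond exactly to the $t$-subsets containing $k$, of which there are $\binom{K-1}{t-1} = Z$. For C2, every $(t+1)$-subset $\mathcal{S}$ appears as a label in each of the cells $(\mathcal{S}\setminus\{k\}, k)$ for $k \in \mathcal{S}$, so every integer in $[S]$ occurs. For C3, suppose $p_{\mathcal{T}_1,k_1} = p_{\mathcal{T}_2,k_2} = s$ is an integer; then by construction $\mathcal{T}_1 \cup \{k_1\} = \mathcal{T}_2 \cup \{k_2\}$ with $k_i \notin \mathcal{T}_i$, which forces $k_2 \in \mathcal{T}_1$, $k_1 \in \mathcal{T}_2$ and $\mathcal{T}_2 = (\mathcal{T}_1 \setminus \{k_2\}) \cup \{k_1\}$. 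In particular $\mathcal{T}_1 \neq \mathcal{T}_2$ and $k_1 \neq k_2$, and the two ``cross'' entries $p_{\mathcal{T}_1,k_2}$ and $p_{\mathcal{T}_2,k_1}$ are both stars, matching the required $2 \times 2$ pattern.

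Finally I would read off the caching scheme in the standard PDA-to-scheme manner. Split each file $W_n$ into packets $W_{n,\mathcal{T}}$ indexed by $t$-subsets, and let user $k$ cache $W_{n,\mathcal{T}}$ whenever $k \in \mathcal{T}$; this fills a fraction $\binom{K-1}{t-1}/\binom{K}{t} = t/K = M/N$ of each file, so the cache size is exactly $M$. In the delivery phase, for each $(t+1)$-subset $\mathcal{S}$ the server transmits the XOR $\bigoplus_{k \in \mathcal{S}} W_{d_k,\,\mathcal{S}\setminus\{k\}}$; every summand except the one targeted at user $k \in \mathcal{S}$ is already cached by that user (since $k \in \mathcal{S}\setminus\{k'\}$ for every $k' \neq k$ in $\mathcal{S}$), so $k$ recovers its intended packet. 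The total load is $\binom{K}{t+1}$ coded packets, each of normalized size $1/\binom{K}{t}$, which yields $R = (K-t)/(t+1) = K(1-M/N)/(KM/N+1)$.

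The only genuinely delicate step is the verification of C3: one must exploit that the map $(\mathcal{T},k) \mapsto \mathcal{T} \cup \{k\}$, restricted to pairs with $k \notin \mathcal{T}$, is exactly $(t+1)$-to-one onto $\binom{[K]}{t+1}$ and that any two preimages of the same $(t+1)$-set differ by swapping one element between $\mathcal{T}$ and the extra index, which is precisely what turns the crossed cells into stars. Everything else reduces to a binomial count or to unwinding the definitions.
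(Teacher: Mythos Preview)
Your proposal is correct and is exactly the standard construction of the MN PDA from \cite{MN,YCTC}; the paper itself states this lemma as a cited result and gives no proof, so there is nothing further to compare.
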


\subsection{Multiaccess combination network caching system}
We consider an $(H,r,u,M,N)$ multiaccess combination network caching system (see Fig. \ref{model}) in which a server $S_{er}$ containing $N$ files $\mathcal{W}=\{W_1,W_2,\ldots,W_{N}\}$, each of which is uniformly distributed in $\{0,1\}^B$ for some positive integer $B$, connects $H$ relays through $H$ error-free and interference-free links. Each set of $r$ relays connects $u$ users,  so the total number of users is $K=u{H \choose r}$. Each user has a storage capacity of size $M$ files where $0<M<N$ and all relays have no cache memory. Each relay $h$ could broadcast the intermediate signals from the server to the users who is connected to it. The users is denoted by $\mathcal{K}=\{\mathbf{k}=(\mathcal{A},i)\ |\  \mathcal{A}\in {[H]\choose r}, i\in[u]\}$ and user $(\mathcal{A},i)$ is connected to relay $h$ if and only if $h\in \mathcal{A}$. The set of relays which connect user $\mathbf{k}\in \mathcal{K}$ is denoted by $\mathcal{A}_{\mathbf{k}}$.
\begin{figure}[!htbp]
  \centering
  \includegraphics[scale=0.4]{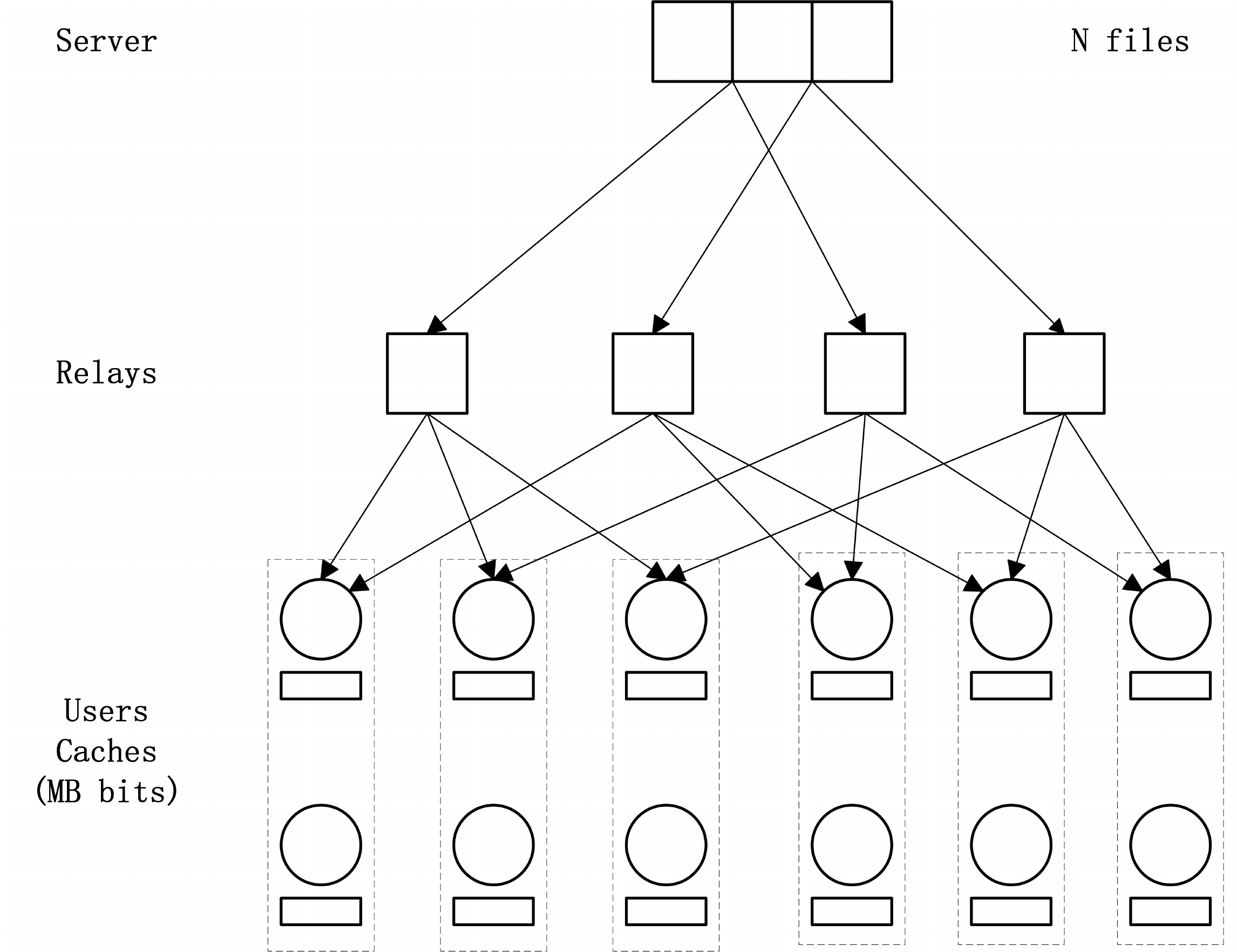}\\
  \caption{A multiaccess combination network with $H=4$, $r=2$, $u=2$}\label{model}
\end{figure}

An $F$-division $(H,r,u,M,N)$ multiaccess combination network caching scheme contains the following two phases.
\begin{itemize}
  \item {\bf Placement phase:} Each file is divided into $F$ packets with equal size. Then each user $\mathbf{k}\in\mathcal{K}$ directly accesses to the file library $\mathcal{W}$ and stores some packets or linear combinations of some packets of the $N$ files in its cache. The set of cached packets by user $\mathbf{k}$ is denoted by $\mathcal{Z}_{\mathbf{k}}$ whose size is at most $M$ files. That is, for each user $\mathbf{k}\in\mathcal{K}$, there exists a function $\phi_{\mathbf{k}}:\{0,1\}^{NB}\rightarrow \{0,1\}^{MB}$ to generate the cache contents $\mathcal{Z}_{\mathbf{k}}=\phi_{\mathbf{k}}((W_n)_{n\in[N]})$. Let $\mathcal{Z}=\{\mathcal{Z}_{\mathbf{k}}\ |\ \mathbf{k}\in\mathcal{K}\}$.
  \item {\bf Delivery phase:} Assume that each user $\mathbf{k}=(\mathcal{A},i)\in\mathcal{K}$ randomly requests one file from the file library $\mathcal{W}$. The demand vector is represented by $\mathbf{d}=(d_{\mathbf{k}})_{\mathbf{k}\in\mathcal{K}}$, which implies that user $\mathbf{k}$ requests the $d_{\mathbf{k}}^{\text{th}}$ file $W_{d_{\mathbf{k}}}$ where $d_{\mathbf{k}}\in[N]$. Given $(\mathcal{Z},\mathbf{d})$, the server sends a message $X_{S_{er}\rightarrow h}$ of size $L_{S_{er}\rightarrow h}$ bits to relay $h\in[H]$. Then relay $h\in[H]$ forwards $X_{S_{er}\rightarrow h}$ to its connecting users. User $\mathbf{k}$ can recover its desired file $W_{d_{\mathbf{k}}}$ by $\{X_{S_{er}\rightarrow h}\ |\ h\in \mathcal{A}\}$ with help of $\mathcal{Z}_{\mathbf{k}}$. This phase can be represented by the following encoding functions and decoding functions.
      \begin{itemize}
\item The $H$ encoding functions: For each relay $h\in[H]$,
$$\psi_{S_{er}\rightarrow h}\ :\  \{0,1\}^{NB}\times \{0,1\}^{KMB}\times[N]^K\rightarrow \{0,1\}^{L_{S_{er}\rightarrow h}}$$
generates the transmitted message $X_{S_{er}\rightarrow h}\triangleq \psi_{S_{er}\rightarrow h}(\mathcal{W},\mathcal{Z},{\bf d})$ from the server to the relay $h$. It is a function of the library $\mathcal{W}$, the cached contents of all users $\mathcal{Z}$ and the demand vector ${\bf d}$.

\item The $K$ decoding functions: For each user $\mathbf{k}=(\mathcal{A},i)\in\mathcal{K}$, $$\mu_{\mathbf{k}}\ :\ \{0,1\}^{\sum_{h\in \mathcal{A}}L_{S_{er}\rightarrow h}}\times \{0,1\}^{MB}\times [N]^K\rightarrow \{0,1\}^{B}$$ decodes the requested file of user $\mathbf{k}$ from all messages received by user $\mathbf{k}$ and its own cache, i.e.,
    $$W_{d_{\mathbf{k}}}=\mu_{\mathbf{k}}\left(\left\{X_{S_{er}\rightarrow h}\ |\ h\in \mathcal{A}\right\}, \mathcal{Z}_{\mathbf{k}},{\bf d}\right).$$
\end{itemize}

\end{itemize}

The transmission load for each relay is defined as
\begin{align*}
R=\max_{{\bf d}\in[N]^K,h\in[H]}\left\{\  \frac{L_{S_{er}\rightarrow h}}{B}\ \right\}.
\end{align*}

\subsection{Combinatorial placement delivery array}
\begin{definition}(CPDA,\cite{YWY})
\label{def-CPDA}
For any positive integers $H$, $r$ and $u$ with $r<H$, a $(K=u{H\choose r},F,Z,S)$ PDA $\mathbf{P}$ is called a $(K,F,Z,S)$ combinatorial placement delivery array (CPDA) if it satisfies the following condition.
\begin{enumerate}
\item [C$4$.] All the columns can be labeled by $\mathcal{K}=\{(\mathcal{A},i)\ |\  \mathcal{A}\in {[H]\choose r}, i\in[u]\}$ such that for any $s\in[S]$, the intersection of the first coordinate $\mathcal{A}$ of all column labels $(\mathcal{A},i)$ satisfying that $s$ appears in column $(\mathcal{A},i)$, denoted by $\mathcal{I}_s$, is not empty.
   \end{enumerate}
\hfill $\square$
\end{definition}

The definition of CPDA is proposed based on the definition of PDA. So a CPDA must be a PDA. However, the converse does not hold necessarily, since CPDA has another requirement, i.e., the condition C$4$. The detailed discussion of the relationship between PDA and CPDA could be found in \cite{CLZW}.
Based on a CPDA, a scheme for the multiaccess combination network can be obtained by using Algorithm \ref{alg:CPDA}.
\begin{algorithm}[htb]
\caption{Caching scheme based on $(K,F_1,Z,S)$ CPDA in \cite{CLZW}}\label{alg:CPDA}
\begin{algorithmic}[1]
\Procedure {Placement}{$\mathbf{P}$, $\mathcal{W}$}
\State Split each file $W_n\in\mathcal{W}$ into $F_1$ packets, i.e., $W_{n}=\{W_{n,j}\ |\ j\in[F_1]\}$.
\For{$\mathbf{k}\in \mathcal{K}$}
\State $\mathcal{Z}_{\mathbf{k}}\leftarrow\{W_{n,j}\ |\ p_{j,\mathbf{k}}=*, \forall~j\in[F_1], n\in [N]\}$
\EndFor
\EndProcedure
\Procedure{Delivery}{$\mathbf{P}, \mathcal{W},{\bf d}$}
\For{$s\in[S]$}
\State $X_s=\oplus_{p_{j,\mathbf{k}}=s,j\in[F_1],\mathbf{k}\in\mathcal{K}}W_{d_{\mathbf{k}},j}$;
\State $\mathcal{I}_s=\cap_{p_{j,\mathbf{k}}=s,j\in[F_1],\mathbf{k}\in\mathcal{K}}\mathcal{A}_{\mathbf{k}}=\{h_{s,1},h_{s,2},\ldots,h_{s,\mu_s}\}$,$1\leq \mu_s\leq r$;
\State Divide $X_s$ into $\mu_s$ sub-packets, i.e., $X_s =\{X_{s,1}, X_{s,2},\ldots, X_{s,\mu_s}\}$;
\For {$l\in[\mu_s]$}
\State Server sends $X_{s,l}$ to relay $h_{s,l}$;
\State Relay $h_{s,l}$ broadcasts $X_{s,l}$ to its connecting users;
\EndFor
\EndFor
\EndProcedure
\end{algorithmic}
\end{algorithm}

From Algorithm \ref{alg:CPDA}, a $(K,F_1,Z,S)$ CPDA $\mathbf{P}$ can be explained intuitively as follows.
\begin{itemize}
\item Each row $j\in[F_1]$ represents the index of the $j^{\text{th}}$ packet of all files and each column $\mathbf{k}\in\mathcal{K}$ represents user $\mathbf{k}$. If $p_{j,\mathbf{k}}=*$, then user $\mathbf{k}$ has cached the $j^{\text{th}}$ packet of all files. So the condition C$1$ of Definition \ref{def-PDA} implies that each user caches $M=\frac{Z}{F_1}N$ files.
\item If $p_{j,\mathbf{k}}=s$ is an integer, it means that the $j^{\text{th}}$ packet of all files is not stored by user $\mathbf{k}$. Then the XOR of the requested packets indicated by $s$ is generated by the server at time slot $s$, denoted by $X_{s}$. If the set $\mathcal{I}_s$ defined in the condition C$4$ of Definition \ref{def-CPDA} is $\{h_{s,1},h_{s,2},\ldots,h_{s,\mu_s}\}$ where $1\leq \mu_s\leq r$, then $X_s$ is divided into $\mu_s$ sub-packets, i.e., $X_s =\{X_{s,1}, X_{s,2},\ldots, X_{s,\mu_s}\}$. Finally, for any $l\in[\mu_s]$, the server sends $X_{s,l}$ to relay $h_{s,l}$ and relay $h_{s,l}$ forwards $X_{s,l}$ to its connecting users. So the subpacketization is at most $rF_1$.
    The condition C$4$ of Definition \ref{def-CPDA} guarantees that user $\mathbf{k}$ can receive the whole message $X_{s}$, since user $\mathbf{k}$ is connected to each relay in $\mathcal{I}_s$. The condition C$3$ of Definition \ref{def-PDA} guarantees that user $\mathbf{k}$ can recover its requested packet indicated by $s$ from $X_{s}$, since it has cached all the other packets in the message $X_{s}$ except its requested one. The occurrence number of integer $s$ in $\mathbf{P}$, denoted by $g_s$, is the coded caching gain at time slot $s$, since the message $X_s$ is useful for $g_s$ users.
\item The condition C$2$ of Definition \ref{def-PDA} implies that the number of messages $X_s$ sent by the server is exactly $S$. If the size of $\mathcal{I}_s$ is a constant for each $s\in[S]$, assume that $|\mathcal{I}_s|=\mu$, and if the number of $\mathcal{I}_s$ containing $h$ is a constant for each relay $h\in[H]$, assume that $|\{\mathcal{I}_s|h\in \mathcal{I}_s, s\in[S]\}|=\nu$, then we have $\nu H=\mu S$. So the transmission load for each relay is $R=\nu\frac{1}{\mu F_1}=\frac{S}{HF_1}$.
\end{itemize}

\begin{lemma}(\cite{CLZW})
\label{th-1}
Given a $(K =u{H\choose r}, F_1,Z, S)$ CPDA for any positive integers $H$, $r$ and $u$ with $r<H$, we have an $(H, r,u,M, N)$ multiaccess combination network caching scheme with memory ratio $\frac{M}{N}= \frac{Z}{F_1}$, subpacketization $F\leq rF_1$. Moreover, if the size of $\mathcal{I}_s$, which is defined in Definition \ref{def-CPDA}, is a constant for each $s\in[S]$ and if the number of $\mathcal{I}_s$ containing $h$ is a constant for each relay $h\in[H]$, the transmission load for each relay is $R=\frac{S}{HF_1}$.
\hfill $\square$
\end{lemma}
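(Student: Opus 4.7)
The plan is to verify that Algorithm \ref{alg:CPDA} applied to the given $(K,F_1,Z,S)$ CPDA $\mathbf{P}$ yields a valid $(H,r,u,M,N)$ multiaccess combination network caching scheme, and then to read off the four claimed quantities from the CPDA axioms together with the structure of the algorithm. I would handle the memory ratio first, then correctness of delivery, then the subpacketization bound, and finally the per-relay load. For the memory ratio the work is immediate: each file is split into $F_1$ equal packets, and by condition C$1$ every column of $\mathbf{P}$ has exactly $Z$ stars, so user $\mathbf{k}$ stores the $Z$ packets $\{W_{n,j}\mid p_{j,\mathbf{k}}=*\}$ of each of the $N$ files, giving cache size $ZN/F_1$ files and hence $M/N=Z/F_1$.

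For correctness of delivery, I fix a user $\mathbf{k}=(\mathcal{A},i)$ and an arbitrary row $j$ with $p_{j,\mathbf{k}}=s\in[S]$, and show that $\mathbf{k}$ can recover $W_{d_{\mathbf{k}},j}$ from the messages it receives and its cache. By construction $W_{d_{\mathbf{k}},j}$ is one of the terms XORed into $X_s$. Condition C$4$ of Definition \ref{def-CPDA} asserts that $\mathcal{I}_s\subseteq\mathcal{A}_{\mathbf{k}'}$ for every column $\mathbf{k}'$ in which $s$ appears; applied to $\mathbf{k}'=\mathbf{k}$ this gives $\mathcal{I}_s\subseteq\mathcal{A}$, so user $\mathbf{k}$ is connected to every relay that forwards a sub-packet of $X_s$ and can therefore reassemble $X_s$. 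Condition C$3$ of Definition \ref{def-PDA} then guarantees that every other term $W_{d_{\mathbf{k}'},j'}$ appearing in $X_s$ satisfies $p_{j',\mathbf{k}}=*$, so $\mathbf{k}$ has it in cache and can XOR it out to obtain $W_{d_{\mathbf{k}},j}$.

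For the subpacketization, I would observe that splitting $X_s$ into $\mu_s$ equal sub-packets in Line $11$ of Algorithm \ref{alg:CPDA} forces each packet $W_{n,j}$ participating in $X_s$ to be partitioned into $\mu_s$ equal sub-parts so that the XOR is well defined coordinate-by-coordinate. Taking a common refinement over all integers $s\in[S]$ in which $W_{n,j}$ is involved, and using $\mu_s\leq|\mathcal{I}_s|\leq r$, each of the $F_1$ original packets is cut into at most $r$ sub-packets, yielding $F\leq rF_1$. Under the two uniformity hypotheses, let $|\mathcal{I}_s|=\mu$ for every $s\in[S]$ and let $\nu=|\{s\in[S]\mid h\in\mathcal{I}_s\}|$ be the common incidence count for every relay $h\in[H]$; double-counting the pairs $(s,h)$ with $h\in\mathcal{I}_s$ gives $\nu H=\mu S$. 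Each sub-packet forwarded by relay $h$ has size $1/(\mu F_1)$ of a file and relay $h$ forwards exactly $\nu$ such sub-packets, so its normalized load is $\nu/(\mu F_1)=S/(HF_1)$, uniformly in $h$.

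The routine parts are the placement/memory count and the double-counting identity for the load. The main obstacle I expect is the delivery argument, which requires using conditions C$3$ and C$4$ jointly: C$4$ must be invoked to certify that user $\mathbf{k}$ physically receives every sub-packet of $X_s$ (because the relays in $\mathcal{I}_s$ are all among $\mathcal{A}_{\mathbf{k}}$), and C$3$ must be invoked to certify that $\mathbf{k}$ has cached every competing term inside $X_s$; this interaction between the PDA and CPDA axioms is the one step that deserves the bulk of the writeup.
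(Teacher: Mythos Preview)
Your proposal is correct and follows essentially the same approach as the paper: the paper does not give a formal proof of this cited lemma but provides an intuitive explanation (in the itemized discussion of Algorithm~\ref{alg:CPDA} immediately preceding the lemma) that matches your argument point for point---C$1$ gives the memory ratio, C$4$ guarantees each relevant user is connected to every relay in $\mathcal{I}_s$ and hence receives all of $X_s$, C$3$ guarantees decodability of $X_s$, $\mu_s\le r$ gives the subpacketization bound, and the double-counting identity $\nu H=\mu S$ gives the per-relay load $R=\nu/(\mu F_1)=S/(HF_1)$. Your write-up is slightly more explicit than the paper's sketch, but the ideas are identical.
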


In a PDA, a star not contained in any subarray shown as \eqref{form} in C$3$-b of Definition \ref{def-PDA}, is called useless. The authors in \cite{ZCWZ} pointed out that useless stars not only make no contribution to reducing the transmission load, but also result in a high subpacketization. If each column of a $(K,F,Z,S)$ PDA has $Z'$ useless stars, the authors in \cite{ZCWZ} improved the scheme in \cite{YTCC} by deleting all the useless stars and using an $[F,F-Z']_q$ maximum distance separable (MDS) code \cite{L} for some prime power $q$, and came up with a new coded caching scheme with smaller transmission load and subpacketization than the original scheme in \cite{YTCC} for the same number of users and memory ratio. In fact, this idea also works for the multiaccess combination network caching system.
\begin{lemma}\rm
\label{le-coded CPDA}
Given a $(K =u{H\choose r}, F_1,Z, S)$ CPDA $\mathbf{P}$ for any positive integers $H$, $r$ and $u$ with $r<H$, assume that there exist  $Z'$ useless stars in each column of $\mathbf{P}$. Then we have an $(H, r,u,M, N)$ multiaccess combination network caching scheme with memory ratio $\frac{M}{N}= \frac{Z-Z'}{F_1-Z'}$, subpacketization $F\leq r(F_1-Z')$. If the size of $\mathcal{I}_s$, which is defined in Definition \ref{def-CPDA}, is a constant for each integer $s\in[S]$ and if the number of $\mathcal{I}_s$ containing $h$ is a constant for each relay $h\in[H]$, the transmission load for each relay is $R=\frac{S}{H\left(F_1-Z'\right)}$.
\hfill $\square$
\end{lemma}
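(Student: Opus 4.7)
The plan is to follow the MDS-code idea of \cite{ZCWZ} adapted to the CPDA setting: keep the delivery scheme induced by $\mathbf{P}$ through Algorithm \ref{alg:CPDA}, but precode the files with an MDS code so that the effective subpacketization shrinks from $F_1$ to $F_1-Z'$ while the useless stars can be dropped from every user's cache. Concretely, set $F_1':=F_1-Z'$ and split each file $W_n$ into $F_1'$ source packets of equal size. For a prime power $q\geq F_1$, apply an $[F_1,F_1']_q$ MDS code (e.g.\ Reed--Solomon) to obtain $F_1$ coded packets $\tilde{W}_{n,1},\ldots,\tilde{W}_{n,F_1}$ per file, any $F_1'$ of which determine all the source packets.

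In the placement phase I let user $\mathbf{k}$ store $\tilde{W}_{n,j}$ for every $n\in[N]$ and every row $j$ such that $p_{j,\mathbf{k}}=*$ is a \emph{useful} star; the $Z'$ useless stars in column $\mathbf{k}$ are simply ignored. Since each column then contributes $Z-Z'$ stored coded packets per file, the memory ratio is $(Z-Z')/F_1'$, as required. The delivery phase is Algorithm \ref{alg:CPDA} applied verbatim to the coded packets.

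Correctness reduces to showing each user decodes its requested file. Fix a user $\mathbf{k}$ and an integer $s=p_{j,\mathbf{k}}$. By C$4$ of Definition \ref{def-CPDA} all relays in $\mathcal{I}_s\subseteq\mathcal{A}_{\mathbf{k}}$ forward their part of $X_s$, so $\mathbf{k}$ receives the whole XOR. Each other term $\tilde{W}_{d_{\mathbf{k}'},j'}$ of $X_s$ corresponds to another occurrence $p_{j',\mathbf{k}'}=s$, and by C$3$-b of Definition \ref{def-PDA} the $2\times 2$ subarray on rows $\{j,j'\}$ and columns $\{\mathbf{k},\mathbf{k}'\}$ has the form \eqref{form}; in particular $p_{j',\mathbf{k}}=*$, and this very star is contained in that $2\times 2$ subarray, hence is useful and thus actually cached by user $\mathbf{k}$. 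Therefore user $\mathbf{k}$ can cancel every other term in $X_s$ and extract $\tilde{W}_{d_{\mathbf{k}},j}$. Ranging over the $F_1-Z$ integer entries in column $\mathbf{k}$, user $\mathbf{k}$ recovers $F_1-Z$ coded packets of $W_{d_{\mathbf{k}}}$; combined with the $Z-Z'$ coded packets it cached, this amounts to $F_1-Z'=F_1'$ coded packets, enough for MDS decoding of $W_{d_{\mathbf{k}}}$.

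The subpacketization and load accounting then parallels Lemma \ref{th-1}: splitting each $X_s$ into $\mu_s\leq r$ sub-packets across the relays in $\mathcal{I}_s$ yields overall subpacketization at most $rF_1'=r(F_1-Z')$, and when $|\mathcal{I}_s|$ is a constant $\mu$ over $s$ while each relay appears in a constant number $\nu$ of the sets $\mathcal{I}_s$, the identity $\nu H=\mu S$ gives per-relay load $R=S/(H(F_1-Z'))$. The only non-routine step in the whole argument is the useful-star claim inside the decoding step, i.e., verifying that every star invoked to cancel an XOR term is useful (and hence truly present in the reduced cache); this is the main obstacle, and it drops out once the $2\times 2$ structure of C$3$-b is unwound against the very definition of a useful star.
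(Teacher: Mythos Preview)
Your proof is correct and follows essentially the same approach as the paper: precode each file with an $[F_1,F_1-Z']_q$ MDS code, cache only the $Z-Z'$ coded packets indexed by useful stars, run the delivery of Algorithm~\ref{alg:CPDA} unchanged, and decode via the MDS property from the $(Z-Z')+(F_1-Z)=F_1-Z'$ coded packets each user obtains. The one place where you are more explicit than the paper is in checking that every star invoked during XOR cancellation is automatically useful (it sits in the very $2\times 2$ block of C$3$-b that witnesses the repeated integer $s$), which the paper leaves implicit; otherwise the two arguments coincide.
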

\begin{proof}
Assume that $\mathbf{P}$ is a $(K =u{H\choose r},F_1,Z,S)$ CPDA where each column has $Z'$ useless stars. Delete the $Z'$ useless stars in each column, we obtain a new array $\mathbf{P}'=(p'_{j,\mathbf{k}}|j\in[F_1], \mathbf{k}\in\mathcal{K})$. Clearly each column of $\mathbf{P}'$ has $Z'$ blanks, $Z-Z'$ stars and $F_1-Z$ integers.
Based on $\mathbf{P}'$, we modify the placement strategy in Algorithm \ref{alg:CPDA} as follows: the server divides each file into $F_1-Z'$ equal-sized packets and then encodes them using an $[F_1,F_1-Z']_q$ MDS code for some prime power $q$. The resulting encoded packets are denoted by $W_{n,1}$, $W_{n,2}$,$\ldots$, $W_{n,F_1}$ for each file $W_{n}$ where $n\in[N]$. Using the caching strategy in Lines 3-5 in Algorithm \ref{alg:CPDA}, each user $\mathbf{k}$ caches $Z_{\mathbf{k}}=\{W_{n,j}|p'_{j,\mathbf{k}}=*, j\in[F_1], n\in[N]\}$. Clearly, the memory ratio of each user is $\frac{M}{N}=\frac{Z-Z'}{F_1-Z'}$. Now let us consider its subpacketization and transmission load for each relay. For any request vector $\mathbf{d}$ in the delivery phase, we use the same delivery strategy as in Algorithm \ref{alg:CPDA}. Then each user can get exactly $F_1-Z$ required coded packets. From the property of an $[F_1,F_1-Z']_q$ MDS code, each user can recover its requested file. Since the size of $\mathcal{I}_s$ is a constant for each integer $s\in[S]$ and the number of $\mathcal{I}_s$ containing $h$ is a constant for each relay $h\in[H]$, the transmission load for each relay is $R=\frac{S}{H\left(F_1-Z'\right)}$.
\end{proof}

\begin{remark}\rm
 Given a $(K,F_1,Z,S)$ CPDA $\mathbf{P}$, if each column of $\mathbf{P}$ has $Z'$ useless stars, then the scheme in Lemma \ref{le-coded CPDA} has smaller memory ratio and subpacketization than the scheme in Lemma \ref{th-1}, since $\frac{Z}{F_1}>\frac{Z-Z'}{F_1-Z'}$ and $F_1>F_1-Z'$ always hold for any positive integer $Z'<Z$. It is worth noting that for the scheme in Lemma \ref{le-coded CPDA}, the operation field must be $O(rF_1)$, hence the size of each packet must be proximately of length $\log_2(rF_1)$ bits. This implies that the size of each file in the server must be more than $\log_2(rF_1)(r(F_1-Z'))$.
 \hfill $\square$
\end{remark}
\section{The schemes for the multiaccess combination network}
\label{sec-first}
In this section, we will propose three coded caching schemes for the multiaccess combination network caching system. The first one is obtained directly from the idea in \cite{ZY2}. The second one is from a direct construction of CPDA via combinatorial design theory and the third one is from a hybrid construction of CPDA based on the CPDA directly constructed before and the MN PDA.
\subsection{The first scheme from the idea in \cite{ZY2}}
\label{subsect-MDS}
Since the scheme in \cite{ZY2} can be used to deal with any relay networks with arbitrary topologies, it also works for the multiaccess combination network caching setting. Specifically, from the idea in \cite{ZY2} we can get an $(H,r,u,M,N)$ multiaccess combination network caching scheme as follows. Firstly, the server divides each file into $r$ packets and then encodes them into $H$ coded packets by an $[H,r]_q$ MDS code for some prime power $q$. Denote the $h^{\text{th}}$ coded packet of all files by $\mathcal{W}^{(h)}=\{W^{(h)}_n\ |\ n\in[N]\}$. It is worth noting that the size of each coded packet is $\frac{B}{r}$ bits. Secondly, for each relay $h$, by taking the $h^{\text{th}}$ coded packet of all files  $\mathcal{W}^{(h)}$ as the file library and using the $(\widetilde{K}=u{H-1\choose r-1},M,N)$ MN scheme, the server sends all the required coded signals to relay $h$, then relay $h$ forwards them to its connecting users. Let $t=\frac{\widetilde{K}M}{N}$, then each user caches exactly $\frac{r{\widetilde{K}-1\choose t-1}}{r{\widetilde{K}\choose t}}NB=MB$ bits, since each user is just connected to $r$ relays.
Thus the transmission load for each relay is $R=\frac{\widetilde{K}(1-M/N)}{r(\widetilde{K}M/N+1)}$. That is the following result.
\begin{lemma}(ZY scheme)
\label{le-MN-PDA}
For any positive integers $H$, $r$, $u$, $M$ and $N$ with $r<H$ and $M<N$, let $\widetilde{K}=u{H-1\choose r-1}$, if $t=\frac{\widetilde{K}M}{N}$ is an integer, we have an $(H,r,u,M,N)$ multiaccess combination network caching scheme with subpacketization
$F=r{\widetilde{K}\choose t}$ and transmission load for each relay $R=\frac{\widetilde{K}(1-\frac{M}{N})}{r(t+1)}$.
\hfill $\square$
\end{lemma}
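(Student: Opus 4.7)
The proof plan is to formalize the three-step construction already sketched in the paragraph preceding the lemma and verify that the memory and rate expressions come out exactly as claimed. First, I would set up the MDS preprocessing: for each file $W_n$, split it into $r$ equal-sized pieces of $B/r$ bits and encode them with an $[H,r]_q$ MDS code (for any prime power $q \ge H$) to obtain $H$ coded packets $W_n^{(1)},\dots,W_n^{(H)}$, each of size $B/r$ bits. For each relay $h\in[H]$, let $\mathcal{W}^{(h)}=\{W_n^{(h)} : n\in[N]\}$ serve as a virtual file library of $N$ files, each of length $B/r$.

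Next I would identify the virtual MN subsystem at each relay. A user $\mathbf{k}=(\mathcal{A},i)\in\mathcal{K}$ is connected to relay $h$ iff $h\in\mathcal{A}$; since the number of $r$-subsets of $[H]$ containing a fixed $h$ is $\binom{H-1}{r-1}$ and each such subset carries $u$ users, exactly $\widetilde{K}=u\binom{H-1}{r-1}$ users are attached to relay $h$. Apply the $(\widetilde{K},M,N)$ MN scheme (Lemma \ref{le-MN-PDA} in the excerpt) to the virtual library $\mathcal{W}^{(h)}$: each coded packet is split into $\binom{\widetilde{K}}{t}$ subpackets, and each user attached to $h$ caches the $\binom{\widetilde{K}-1}{t-1}$ subpackets indexed by the sets containing its ID. Running this independently at all $H$ relays yields total subpacketization $F=r\binom{\widetilde{K}}{t}$, since each of the $r$ slices (one per row of the MDS code) is further cut into $\binom{\widetilde{K}}{t}$ pieces of size $B/(r\binom{\widetilde{K}}{t})$.

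Then I would verify the memory budget. A fixed user $\mathbf{k}$ participates in $r$ virtual MN subsystems (one per relay in $\mathcal{A}_{\mathbf{k}}$). In each subsystem it stores $\binom{\widetilde{K}-1}{t-1}$ subpackets of the $N$ coded packets of size $B/r$ each, totaling
\[
r\cdot\binom{\widetilde{K}-1}{t-1}\cdot N\cdot\frac{B/r}{\binom{\widetilde{K}}{t}}
=\frac{t}{\widetilde{K}}\cdot NB=\frac{\widetilde{K}M/N}{\widetilde{K}}\cdot NB=MB
\]
bits, matching the cache-size constraint exactly. For delivery, the MN scheme at each relay guarantees that every attached user can recover all $N$ of the coded packets $W_n^{(h)}$ it lacks; once $\mathbf{k}$ has obtained $W_{d_\mathbf{k}}^{(h)}$ for every $h\in\mathcal{A}_\mathbf{k}$, i.e.\ $r$ coded symbols of $W_{d_\mathbf{k}}$, the MDS property lets it decode $W_{d_\mathbf{k}}$.

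Finally I would compute the per-relay load. The $(\widetilde{K},M,N)$ MN scheme transmits $\widetilde{K}(1-M/N)/(t+1)$ file-units per relay, but here each virtual ``file'' has size $B/r$ rather than $B$, so the load becomes $R=\widetilde{K}(1-M/N)/\bigl(r(t+1)\bigr)$. The only place a subtlety could arise is ensuring the three scalings — MDS shrinkage by $1/r$, MN subpacketization by $\binom{\widetilde{K}}{t}$, and the $r$ parallel subsystems — are composed consistently in the memory and rate accounting; everything else follows directly from Lemma \ref{le-MN-PDA} applied to each relay and from correctness of the MDS code.
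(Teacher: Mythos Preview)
Your proposal is correct and follows exactly the same three-step argument as the paper: MDS-encode each file into $H$ coded packets of size $B/r$, run an independent $(\widetilde{K},M,N)$ MN scheme at each relay on its virtual library, and then account for memory, subpacketization, and per-relay load precisely as the paragraph preceding the lemma does. Your write-up merely fleshes out the bookkeeping (the $r\cdot\binom{\widetilde{K}-1}{t-1}\cdot N\cdot \tfrac{B/r}{\binom{\widetilde{K}}{t}}=MB$ computation and the $1/r$ rescaling of the MN load) that the paper leaves implicit.
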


It is easy to verify that the transmission load for each relay of ZY scheme is approximately $\frac{1}{r}$ of the transmission load of MN scheme. However, the subpacketization of ZY scheme is exponential order with $\widetilde{K}=u{H-1\choose r-1}$, which may lead to high implementation complexity and infeasibility in reality. Therefore, reducing the subpacketization while slightly increasing the transmission load for each relay as the tradeoff is very meaningful.
\subsection{The second scheme via direct construction of CPDA}
\label{subsect-direct}
In this section, we propose a direct construction of CPDA by means of Combinatorial Design Theory, which leads to a multiaccess combination network caching scheme with much lower subpacketization than ZY scheme in Lemma \ref{le-MN-PDA}.

\begin{construction}
\label{constr-1}
For any positive integers $H$, $r$, $a$, $\omega$ and $\lambda$ with $\max\{\omega,\lambda\}\leq r<H$
and $a<H$, let the row index set $\mathcal{F}$ and the column index set $\mathcal{K}$ be
\begin{eqnarray}
\begin{split}
\label{eq-row-column-labels}
\mathcal{F}&=\left\{{\bf f}=(f_1, f_2,\ldots, f_{H})\in \{0,1\}^H\ |\  wt({\bf f})=a\right\},&\\
\mathcal{K}&=\left\{\mathbf{k}=(\mathcal{T},\mathbf{b}) \ \left|\  \mathcal{T}\in {[H]\choose r}, {\bf b}\in \{0,1\}^r, \text{wt}({\bf b})=r-\omega\right\}\right.,&
\end{split}
\end{eqnarray}
respectively.
An ${H\choose a}\times {r\choose \omega}{H\choose r}$ array $\mathbf{P}=(p_{\mathbf{f},\mathbf{k}})_{ \mathbf{f}\in \mathcal{F},\mathbf{k}\in \mathcal{K}}$ with $\mathcal{T}=\{ \delta_1, \delta_2, \ldots, \delta_{r}\}\subset[H]$, $\delta_1 <\delta_2<\cdots <\delta_{r}$ and $\mathbf{b}=(b_1,b_2,\ldots,b_{r})$, is defined as
 \begin{eqnarray}
 \label{eq-defin-array}
p_{\mathbf{f},\mathbf{k}}=\left\{\begin{array}{cccccc}
(\mathbf{e},\mathcal{C}) &\ \ \ \hbox{if}\ \ d(\mathbf{f}|_{\mathcal{T}},\mathbf{b})=r-\lambda, \\[0.2cm]
* &\ \ \ \hbox{Otherwise},\ \ \\
\end{array}\right.
\end{eqnarray}
where $\mathbf{e}=(e_1,e_2,\dots,e_{H})\in\{0,1\}^H$ is defined as
\begin{eqnarray}
\label{eq-entry}
e_i=\left\{\begin{array}{cccccc}
b_v &\ \ \ \hbox{if}\ \ i=\delta_v, v\in[r],\\[0.2cm]
f_i &\ \ \ \hbox{Otherwise}\ \ \\
\end{array}\right.
\end{eqnarray}
and
\begin{equation}
\label{eq-c}
\mathcal{C}=\{\delta_v\ |\ f_{\delta_v}=b_v, v\in[r]\}\subseteq\{\delta_1,\delta_2,\ldots,\delta_{r}\}.
\end{equation}
Clearly, we have $|\mathcal{C}|=\lambda$ from \eqref{eq-defin-array}.
\hfill $\square$
\end{construction}

\begin{example}
\label{exam-1}
When $H = 4$, $a=2$, $r = 2$, $\omega=1$ and $\lambda=1$, the array generated by Construction \ref{constr-1} is shown in Table \ref{tab-1}.
\begin{table}[H]

  \centering
\renewcommand\arraystretch{1}
\setlength{\tabcolsep}{0.8mm}{
  \fontsize{8}{10}\selectfont
  \caption{The $(12,6,4,8)$ CPDA with $H=4$, $a=r=2$, $\omega=1$ and $\lambda=1$. \label{tab-1}}
    \begin{tabular}{|c|c|c|c|c|c|c|c|c|c|c|c|c|c|c|c|c|c|c|c|c|}
    \hline
    \multirow{2}*{\diagbox{${\bf f}$}{$(\mathcal{T},\mathbf{b})$}}& \multicolumn{2}{c|}{\{1,2\}}&\multicolumn{2}{c|}{\{1,3\}}&\multicolumn{2}{c|}{ \{1,4\}}&\multicolumn{2}{c|}{\{2,3\}}&\multicolumn{2}{c|}{\{2,4\}}&\multicolumn{2}{c|}{\{3,4\}}\cr\cline{2-13}

    &(0,1)&(1,0)&(0,1)&(1,0)&(0,1)&(1,0)&(0,1)&(1,0)&(0,1)&(1,0)&(0,1)&(1,0)\cr
    \hline

    (0,0,1,1)&0111,1&	1011,2&	*&	*	&*&	*&	*&	*&	*	&*&	0001,4&	0010,3\cr\hline
    (0,1,0,1)&*	&*&	0111,1	&1101,3	&*	&*	&*	&*	&0001,4	&0100,2&	*	&*\cr\hline
    (0,1,1,0)&*	&*	&*&	*	&0111,1	&1110,4	&0010,3	&0100,2	&*	&*	&*	&*\cr\hline
    (1,0,0,1)&*	&*	&*	&*	&0001,4	&1000,1&1011,2&	1101,3&	*&	*&	*&	*\cr\hline
    (1,0,1,0)&*	&*	&0010,3	&1000,1	&*&	*	&*	&*	&1011,2	&1110,4	&*	&*\cr\hline
    (1,1,0,0)&0100,2&	1000,1&	*&	*&	*&	*	&*&	*&	*&	*	&1101,3&	1110,4\cr\hline
    \end{tabular} }
\end{table}For instance, when ${\bf f}=(0, 0,1, 1)$ and $(\mathcal{T},{\bf b})=(\{1,2\},(1,0))$, since $\text{d}({\bf f}|_{\{1,2\}},(1,0))=\text{d}((0,0),(1,0))=1=r-\lambda$, we have $\mathcal{C}=\{2\}$ from \eqref{eq-c}, then from \eqref{eq-defin-array} and \eqref{eq-entry} we have $p_{{\bf f},(\mathcal{T},{\bf b})}=((1,0,1,1),\{2\})$, abbreviated as $1011,2$. It is easy to verify that the array in Table \ref{tab-1} is a $(12, 6, 4, 8)$ CPDA, which leads to an $(H,r,u,M,N)$ multiaccess combination network caching scheme with $u={r\choose \omega}=2$,  memory ratio $\frac{M}{N}=\frac{2}{3}$, subpacketization $F=6$ and transmission load for each relay $R=\frac{8}{6}\cdot\frac{1}{4}=\frac{1}{3}$. In this case, the subpacketization and transmission load for each relay of ZY scheme in Lemma \ref{le-MN-PDA} are $F_{ZY}=30$ and $R_{ZY}=\frac{1}{5}$ respectively.
\hfill $\square$
\end{example}

In general, the array generated by Construction \ref{constr-1} is a CPDA, where each column has the same number of useless stars. Hence, from Lemma \ref{le-coded CPDA} we have the following result, whose proof could be found in Appendix \ref{appendix-th-2}.
\begin{theorem}(Scheme A)
\label{th-2}
For any positive integers $H$, $r$, $a$, $\omega$ and $\lambda$ with $\max\{\omega,\lambda\}\leq r<H$
and $a<H$, there exists a $({r\choose \omega}{H\choose r},{H\choose a},Z,S)$ CPDA with
$Z={H\choose a}-\sum\limits_{\lambda_1\in [x:y]}{\omega\choose \lambda_1}{r-\omega\choose \lambda-\lambda_1}{H-r\choose a-\omega+2\lambda_1-\lambda}$ and  $S=\sum\limits_{\lambda_1\in [x:y]}{H\choose a+r-2\omega+2\lambda_1-\lambda}{H-(a+r-2\omega+2\lambda_1-\lambda)\choose\lambda_1}$ ${a+r-2\omega+2\lambda_1-\lambda\choose \lambda-\lambda_1}$,
where $x=\max\{0,\lambda-r+\omega\}$, $y=\min\{\omega,\lambda\}$. It can lead to an $(H,r,u={r\choose \omega},M,N)$ multiaccess combination network caching scheme with memory ratio, subpacketization and transmission load for each relay being
\begin{eqnarray}
\frac{M}{N}&=&1-\frac{\sum\limits_{\lambda_1\in [x:y]}{\omega\choose \lambda_1}{r-\omega\choose \lambda-\lambda_1}{H-r\choose a-\omega+2\lambda_1-\lambda}}{{H\choose a}-\sum\limits_{\lambda'\in[0:\lambda-1]}\sum\limits_{\lambda_1\in[x':y']}{\omega\choose \lambda_1}\times {r-\omega\choose \lambda'-\lambda_1}\times{H-r\choose a-\omega+2\lambda_1-\lambda'}},\label{eq-memory-ratio} \\
F&=&\lambda\left({H\choose a}-\sum\limits_{\lambda'\in[0:\lambda-1]}\sum\limits_{\lambda_1\in[x':y']}{\omega\choose \lambda_1}\times {r-\omega\choose \lambda'-\lambda_1}\times{H-r\choose a-\omega+2\lambda_1-\lambda'}\right),\label{eq-subpacketization} \\
R&=&\frac{\sum\limits_{\lambda_1\in[x:y]}{H\choose a+r-2\omega+2\lambda_1-\lambda}
{H-(a+r-2\omega+2\lambda_1-\lambda)\choose \lambda_1}
{a+r-2\omega+2\lambda_1-\lambda\choose \lambda- \lambda_1}}{H\left({H\choose a}-\sum\limits_{\lambda'\in[0:\lambda-1]}\sum\limits_{\lambda_1\in[x':y']}{\omega\choose \lambda_1}\times {r-\omega\choose \lambda'-\lambda_1}\times{H-r\choose a-\omega+2\lambda_1-\lambda'}\right)}\label{eq-rate}
\end{eqnarray}
respectively, where $x'=\max\{0,\lambda'-r+\omega\}$ and $y'=\min\{\omega,\lambda'\}$.
\hfill $\square$
\end{theorem}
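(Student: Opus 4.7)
The plan is to verify that the array $\mathbf{P}$ of Construction~\ref{constr-1} is a $\bigl({r\choose\omega}{H\choose r},{H\choose a},Z,S\bigr)$ CPDA, to identify and count the useless stars $Z'$ in each column, and then to invoke Lemma~\ref{le-coded CPDA}. The natural counting parameter throughout will be $\lambda_1:=|\{v\in[r]:f_{\delta_v}=b_v=0\}|$, the number of common zeros of $\mathbf{f}|_{\mathcal{T}}$ and $\mathbf{b}$. Once $\lambda_1$ is fixed, the three degrees of freedom in $\mathbf{f}$---which $\lambda_1$ of the $\omega$ zero-positions of $\mathbf{b}$ agree, which $\lambda-\lambda_1$ of the $r-\omega$ one-positions agree, and the completion of $\mathbf{f}$ on $[H]\setminus\mathcal{T}$---factor cleanly into the three binomials ${\omega\choose\lambda_1}$, ${r-\omega\choose\lambda-\lambda_1}$ and ${H-r\choose a-\omega+2\lambda_1-\lambda}$ that pervade the statement.

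For C1, this decomposition produces a column-independent count of non-stars, summed over $\lambda_1\in[x:y]$, so each column has exactly $Z$ stars as displayed. C2 is automatic after labelling the distinct pairs $(\mathbf{e},\mathcal{C})$ by $[S]$. For C3, the construction is rigid: an integer entry $(\mathbf{e},\mathcal{C})$ appears at $(\mathbf{f},(\mathcal{T},\mathbf{b}))$ iff $\mathcal{C}\subseteq\mathcal{T}$, $\mathbf{b}=\mathbf{e}|_{\mathcal{T}}$, and $\mathbf{f}$ is recovered from $\mathbf{e}$ by copying on $\mathcal{C}\cup([H]\setminus\mathcal{T})$ and flipping on $\mathcal{T}\setminus\mathcal{C}$. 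Two distinct occurrences therefore come from distinct $\mathcal{T}_1\neq\mathcal{T}_2$, which forces $\mathbf{f}_1\neq\mathbf{f}_2$ on the symmetric difference of $\mathcal{T}_1$ and $\mathcal{T}_2$; a direct computation then gives $d(\mathbf{f}_1|_{\mathcal{T}_2},\mathbf{b}_2)=r-\lambda-|\mathcal{T}_2\setminus\mathcal{T}_1|<r-\lambda$, so both off-diagonal cells are stars. For C4, the same rigid description yields $\mathcal{I}_{(\mathbf{e},\mathcal{C})}=\mathcal{C}$, of constant size $\lambda\ge 1$. The count of $S$ follows by partitioning integers according to $\mathrm{wt}(\mathbf{e})=a+r-2\omega+2\lambda_1-\lambda$ and choosing $\lambda_1$ zero-positions and $\lambda-\lambda_1$ one-positions of $\mathbf{e}$ to form $\mathcal{C}$, recovering the stated formula.

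The useless-star characterisation I aim for is: a star at $(\mathbf{f},(\mathcal{T},\mathbf{b}))$ with $\lambda'$ agreements is useless iff $\lambda'<\lambda$. The ``only if'' direction follows because any realising $2\times 2$ form induces the partition $\mathcal{A}_0=\mathcal{C}\sqcup(\mathcal{T}\setminus\mathcal{T}_1)$ of the agreement set, whence $|\mathcal{T}\setminus\mathcal{T}_1|=\lambda'-\lambda\ge 0$. The ``if'' direction requires, for $\lambda'\ge\lambda$, producing $\mathcal{S}\subseteq\mathcal{A}_0$ and $\mathcal{S}'\subseteq[H]\setminus\mathcal{T}$ of equal size $\lambda'-\lambda$ and equal $\mathbf{f}$-weight, so that the induced $\mathbf{e}|_{\mathcal{T}_1}$ has the correct weight $r-\omega$. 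Summing the useless-star count (parameterised again by $\lambda_1$) over $\lambda'\in[0:\lambda-1]$ reproduces the inner double sum appearing in the denominators of \eqref{eq-memory-ratio}--\eqref{eq-rate}. The construction is symmetric under permutations of $[H]$, so the number of sets $\mathcal{I}_s$ containing a fixed relay is independent of the relay; combined with $|\mathcal{I}_s|=\lambda$, Lemma~\ref{le-coded CPDA} converts the CPDA into the claimed $(H,r,u,M,N)$ scheme with $F_1={H\choose a}$, and \eqref{eq-memory-ratio}--\eqref{eq-rate} follow.

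The main obstacle I anticipate is the existence part of the useless-star characterisation: assembling the compatible pair $(\mathcal{S},\mathcal{S}')$ with matched $\mathbf{f}$-weights amounts to showing that a certain interval of admissible common weights $k$ is non-empty. I would attack it by a Hall-type argument that splits on how $\mathrm{wt}(\mathbf{f}|_{\mathcal{T}})$ compares to $\mathrm{wt}(\mathbf{b})$ and invokes the global constraint $\mathrm{wt}(\mathbf{f})=a$ to guarantee enough 0- and 1-entries in the two ambient sets $\mathcal{A}_0$ and $[H]\setminus\mathcal{T}$. All remaining steps reduce to routine bookkeeping once $\lambda_1$ is adopted as the counting variable.
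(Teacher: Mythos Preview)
Your plan is essentially the paper's own proof: verify C1--C4 for Construction~\ref{constr-1} via the $\lambda_1$ decomposition, identify the useless stars as those with $\lambda'<\lambda$, and invoke Lemma~\ref{le-coded CPDA}. Your C3 computation $d(\mathbf{f}_1|_{\mathcal{T}_2},\mathbf{b}_2)=r-\lambda-|\mathcal{T}_2\setminus\mathcal{T}_1|$ is correct and a bit cleaner than the paper's version, and your symmetry argument for the constancy of $|\{\mathcal{I}_s:h\in\mathcal{I}_s\}|$ is a legitimate shortcut where the paper does an explicit binomial computation.

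There is one over-reach that you should drop. You have swapped ``if'' and ``only if'': the argument you label ``only if'' (a realising $2\times 2$ block forces $\mathcal{A}_0=\mathcal{C}\sqcup(\mathcal{T}\setminus\mathcal{T}_1)$, hence $\lambda'\ge\lambda$) is in fact the implication $\lambda'<\lambda\Rightarrow$ useless, i.e.\ the \emph{if} direction. That direction is exactly what the paper proves, and it is \emph{all} that is needed: Lemma~\ref{le-coded CPDA} only requires the existence of $Z'$ useless stars per column, not that every remaining star be useful. The converse you flag as your ``main obstacle'' (constructing a witnessing $2\times 2$ block whenever $\lambda'>\lambda$, via matched-weight sets $\mathcal{S},\mathcal{S}'$) is never established in the paper and is irrelevant to Theorem~\ref{th-2}. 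So your anticipated Hall-type argument can simply be deleted.

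One small point that does need care: the claim $\mathcal{I}_{(\mathbf{e},\mathcal{C})}=\mathcal{C}$ is not immediate from the ``rigid description'' alone, since that only gives $\mathcal{C}\subseteq\mathcal{I}_{(\mathbf{e},\mathcal{C})}$. Equality requires exhibiting, for each $h\notin\mathcal{C}$, an occurrence of $(\mathbf{e},\mathcal{C})$ in some column $(\mathcal{T},\mathbf{b})$ with $h\notin\mathcal{T}$; the paper handles this by explicitly building a family $\mathfrak{T}_{\mathcal{C}}$ of such $\mathcal{T}$'s and checking $\bigcap_{\mathcal{T}\in\mathfrak{T}_{\mathcal{C}}}\mathcal{T}=\mathcal{C}$. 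You should do the same (or appeal to the $[H]$-symmetry more carefully).
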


\begin{remark}
\label{rem-1}
\begin{itemize}
\item When $\omega=0$, from Theorem \ref{th-2} we can obtain an $({H\choose r},{H\choose a}, {H\choose a}-{r\choose \lambda}{H-r\choose a-\lambda}$, ${H\choose a+r-\lambda}{a+r-\lambda \choose \lambda})$ CPDA which is exactly the strongly coloring PDA in \cite{YTCC} by the fact ${H\choose a+r-\lambda}{a+r-\lambda \choose \lambda}={H\choose a+r-2\lambda}{H-(a+r-2\lambda) \choose \lambda}$.
\item The row index set $\mathcal{F}$ in \eqref{eq-row-column-labels} is determined by the value of $a$. In fact, $\mathcal{F}$ can be chosen more flexibly. For example, let $\mathcal{F}=\{(f_{1},f_{2},\ldots,f_{H})\ |\ f_{1},f_{2},\ldots,f_{H-1}\in[0:q-1], f_{H}=mod(\sum_{i=1}^{H-1} f_i,q)\}$ and $\mathcal{K}={[H]\choose r}\times [0:q-1]^r$ for any positive integer $q\geq 2$, when $H=4$, $r=2$ and $\lambda=1$, according to \eqref{eq-defin-array}, \eqref{eq-entry} and \eqref{eq-c}, a $(24,8,4,32)$ CPDA can be obtained, which is shown in Table \ref{tab-2}.
    \begin{table}[H]
\caption{The $(24,8,4,32)$ CPDA. \label{tab-2}}
  \centering
\renewcommand\arraystretch{1}
\setlength{\tabcolsep}{0.4mm}{
  \fontsize{6}{8}\selectfont
   \begin{tabular}{|c|c|c|c|c|c|c|c|c|c|c|c|c|c|c|c|c|c|c|c|c|c|c|c|c|c|c|}
    \hline
    \multirow{2}*{\diagbox{${\bf f}$}{$(\mathcal{T},\mathbf{b})$}}&
    \multicolumn{4}{c|}{\{1,2\}}&\multicolumn{4}{c|}{\{1,3\}}&\multicolumn{4}{c|}{ \{1,4\}}&\multicolumn{4}{c|}{\{2,3\}}&\multicolumn{4}{c|}{\{2,4\}}&\multicolumn{4}{c|}{ \{3,4\}}\cr\cline{2-25}
    &(0,0)&(0,1)&(1,0)&(1,1)&(0,0)&(0,1)&(1,0)&(1,1)&(0,0)&(0,1)&(1,0)&(1,1)&(0,0)&(0,1)&(1,0)&(1,1)&(0,0)&(0,1)&(1,0)&(1,1)&(0,0)&(0,1)&(1,0)&(1,1)\cr\hline
(0,0,0,0)&*&0100,1	&1000,2&*&	*&0010,1&1000,3&*&*&0001,1&	1000,4&	*&*	&0010,2	&0100,3	&*&	*&	0001,2&	0100,4&	*&	*&	0001,3&	0010,4&	*\\ \hline
(0,0,1,1)&	*&	0111,1	&1011,2	&*	&0001,1	&*	&*	&1011,3	&0010,1&	*	&*	&1011,4&	0001,2	&*	&*	&0111,3	&0010,2&	*	&*	&0111,4	&*&	0001,4	&0010,3	&* \\ \hline
(0,1,0,1)&	0001,1&	*	&*&	1101,2&	*&	0111,1&	1101,3&	*	&0100,1	&*	&*&	1101,4	&0001,3	&*	&*&	0111,2	&*	&0001,4&	0100,2&	*	&0100,3&	*&	*	&0111,4 \\ \hline
(0,1,1,0)&	0010,1&	*	&*	&1110,2&	0100,1&	*	&*	&1110,3	&*	&0111,1	&1110,4	&*	&*	&0010,3	&0100,2	&*	&0010,4	&*	&*	&0111,2	&0100,4	&*	&*	&0111,3 \\ \hline
(1,0,0,1)&	0001,2&	*	&*	&1101,1	&0001,3&	*&	*&	1011,1&	*	&0001,4	&1000,1&	*&	*&	1011,2	&1101,3	&*	&1000,2&	*	&*	&1101,4&	1000,3	&*	&*	&1011,4\\ \hline
(1,0,1,0)&	0010,2	&*	&*	&1110,1	&*	&0010,3&	1000,1	&*	&0010,4	&*&	*	&1011,1&	1000,2&	*	&*	&1110,3	&*&	1011,2	&1110,4	&*&	1000,4&	*	&*	&1011,3 \\ \hline
(1,1,0,0)&	*&	0100,2&	1000,1	&*	&0100,3	&*&	*	&1110,1&	0100,4&	*&	*	&1101,1&	1000,3	&*	&*	&1110,2&	1000,4&	*	&*	&1101,2&	*&	1101,3&	1110,4	&* \\ \hline
(1,1,1,1)&	*&	0111,2&	1011,1	&*	&*&	0111,3	&1101,1&	*	&*	&0111,4&	1110,1	&*	&*	&1011,3	&1101,2	&*	&*	&1011,4&	1110,2	&*	&*	&1101,4	&1110,3&	*\\ \hline
    \end{tabular}
}
\end{table}
\item From the {\em footnote} \ref{foot:1} in  Appendix \ref{appendix-th-2}, the scheme from the CPDA generated by Construction \ref{constr-1} omits the coded caching gain among the users who are connected to the same $r$ relays.
\end{itemize}
\hfill $\square$
\end{remark}

\subsection{The third scheme via hybrid construction of CPDA}
\label{section-concatenating}
From Theorem \ref{th-2}, an $(H,r,u={r\choose \omega},M,N)$ multiaccess combination network caching scheme can be obtained. However, it has a limitation of the parameter $u$, i.e., $u={r\choose \omega}$. In this subsection, we will propose a hybrid construction of CPDA for any parameter $u$ based on a PDA and a CPDA. Let us use an example to illustrate the main idea of the construction.
\begin{example}
\label{exam-3}
Given the $(3,3,1,3)$ CPDA $\mathbf{P}$ with $H=3$, $r=2$, $u=1$, and the $(2,2,1,1)$ PDA $\mathbf{A}$ in \eqref{PA},
\begin{figure}
\begin{eqnarray}
\label{PA}
\mathbf{P}=
{%
\begin{blockarray}{cccc}
\{1,2\}&\{1,3\}&\{2,3\} \\[-0.2cm]
\begin{block}{(ccc)c}
*&	1&	2 &1\\[-0.2cm]
1&	*&	3 &2\\[-0.2cm]
2&	3&	*&3\\[-0.2cm]
\end{block}
\end{blockarray}}\ \ \ \ \ \ \ \ \ \ \ \mathbf{A}=
{%
\begin{blockarray}{ccc}
 1& 2\\[-0.2cm]
\begin{block}{(cc)c}
*&	1&1 \\[-0.2cm]
1&	*&2\\[-0.2cm]
\end{block}
\end{blockarray}}
\end{eqnarray}
\end{figure}
by replacing each entry in $\mathbf{P}$, i.e., $p_{j_1,\mathbf{k}_1}$ where $j_1\in[3]$, $\mathbf{k}_1\in {[3]\choose 2}$, with an array $\mathbf{A}+p_{j_1,\mathbf{k}_1}=(a_{j_2,k_2}+p_{j_1,\mathbf{k}_1})_{j_2\in[2],k_2\in [2]}$, where $x+*=*$ for any integer $x$, we can obtain the following $(6,6,3,3)$ CPDA $\mathbf{L}$ with $H=3,r=2$ and $u=2$.
\begin{eqnarray*}
\mathbf{L}={%
\begin{blockarray}{ccccccc}
(\{1,2\},1)&(\{1,2\},2)&(\{1,3\},1)&(\{1,3\},2)&(\{2,3\},1)&(\{2,3\},2) \\
\begin{block}{(cccccc)c}
* & * & * & 1 & * & 2&(1,1)\\[-0.2cm]
*&	*&	1&	*&	2&	*&(1,2)\\[-0.2cm]
*&	1&	*&	*&	*&	3 &(2,1)\\[-0.2cm]
1&	*&	*&	*&	3&	* &(2,2)\\[-0.2cm]
*&	2&	*&	3&	*&	* &(3,1)\\[-0.2cm]
2&	*&	3&	*&	*&	*&(3,2) \\[-0.2cm]
\end{block}
\end{blockarray}}
\end{eqnarray*} It is easy to verify that the scheme from $\mathbf{L}$ creates broadcast opportunities among the users who are connected to the same $r$ relays, since $l_{(2,2),(\{1,2\},1)}=l_{(2,1),(\{1,2\},2)}=1$.
\hfill $\square$
\end{example}

Mathematically, the construction is given as follows.
\begin{construction}
\label{constr-2}
Suppose that there exists a $(K_1=u{H\choose r},F_1,Z_1,S_1)$ CPDA $\textbf{P}=(p_{j_1,{\bf k}_1})$, $j_1\in[F_1]$, ${\bf k}_1\in {[H]\choose r}\times[u]$, and a $(K_2,F_2,Z_2,S_2)$ PDA $\textbf{A}=(a_{j_2,k_2})$, $j_2\in[F_2]$, $k_2\in [K_2]$, the array $\textbf{L}=(l_{(j_1,j_2),({\bf k}_1,k_2)})$, $(j_1,j_2)\in[F_1]\times [F_2]$, $({\bf k}_1,k_2)\in({[H]\choose r}\times[u])\times [K_2]$, is defined as
\begin{eqnarray}
\label{const-2}
l_{(j_1,j_2),({\bf k}_1,k_2)}=\left\{\begin{array}{ccc}
a_{j_2,k_2}+(p_{j_1,{\bf k}_1}-1)S_2&\ \ \ \hbox{if}\ \ p_{j_1,{\bf k}_1}\neq * , a_{j_2,k_2}\neq*,\\
* &\ \ \hbox{otherwise}.
\end{array}\right.
\end{eqnarray}
\hfill $\square$
\end{construction}
Let us return to Example \ref{exam-3}, when $j_1=1,{\bf k}_1=\{1,2\}, j_2=1,k_2=1$, since $p_{j_1,{\bf k}_1}=p_{1,\{1,2\}}=*$, we have $l_{(j_1,j_2),({\bf k}_1,k_2)}=*$ from \eqref{const-2}; when $j_1=2,{\bf k}_1=\{1,2\}, j_2=2,k_2=1$, since $p_{j_1,{\bf k}_1}=p_{2,\{1,2\}}=1$ and $a_{j_2,k_2}=a_{2,1}=1$, we have $l_{(j_1,j_2),({\bf k}_1,k_2)}=a_{j_2,k_2}+(p_{j_1,{\bf k}_1}-1)S_2=1$ from \eqref{const-2}.

By Construction \ref{constr-2}, we have the following result, whose proof could be found in Appendix \ref{appendix-th-3}.
\begin{theorem}
\label{th-3}
Given a $(K_1=u{H\choose r},F_1,Z_1,S_1)$ CPDA $\mathbf{P}=\left(p_{j_1,{\bf k}_1}\right)_{j_1\in[F_1],{\bf k}_1\in{[H]\choose r}\times [u]}$ and a $(K_2,F_2,Z_2,S_2)$ PDA $\mathbf{A}$, there always exists a $(uK_2{H\choose r},$ $F_1F_2,Z_1F_2+(F_1-Z_1)Z_2,S_1S_2)$ CPDA $\mathbf{L}$.
If each set $\mathcal{I}_{s_1}$ for $\mathbf{P}$ (denoted by $\mathcal{I}_{s_1}^{(\mathbf{P})}$) where $s_1\in[S_1]$ has the same size $\mu$ and the number of $\mathcal{I}_{s_1}^{(\mathbf{P})}$ containing $h$ is a constant for each $h\in[H]$,
and if there are $Z'_1$ useless stars in each column of $\mathbf{P}$, then $\mathbf{L}$ leads to an $(H,r,uK_2,M,N)$ multiaccess combination network caching scheme with memory ratio $\frac{M}{N}=\frac{Z_1-Z'_1}{F_1-Z'_1}+\left(1-\frac{Z_1-Z'_1}{F_1-Z'_1}\right)\frac{Z_2}{F_2}$, subpacketization $F=\mu(F_1-Z'_1)F_2$ and transmission load for each relay $R=\frac{S_1S_2}{H(F_1-Z'_1)F_2}$.
\hfill $\square$
\end{theorem}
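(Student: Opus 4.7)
My plan is to first establish that $\mathbf{L}$ is a CPDA with the claimed outer parameters, and then use a two-level hierarchical scheme (rather than invoking Lemma \ref{le-coded CPDA} directly on $\mathbf{L}$) to read off the memory ratio, subpacketization and per-relay load. For condition C1, fixing a column $({\bf k}_1,k_2)$ of $\mathbf{L}$, a star at $((j_1,j_2),({\bf k}_1,k_2))$ occurs iff $p_{j_1,{\bf k}_1}=*$ or $a_{j_2,k_2}=*$; the first possibility contributes $Z_1 F_2$ stars and the second contributes $(F_1-Z_1)Z_2$ additional stars, giving the stated count. Condition C2 follows since every $s\in[S_1S_2]$ decomposes uniquely as $s=s_2+(s_1-1)S_2$ with $s_1\in[S_1]$, $s_2\in[S_2]$, and the integer ranges of $\mathbf{P}$ and $\mathbf{A}$ are both fully covered. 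For C4, the columns containing $s$ are precisely those $({\bf k}_1,k_2)$ for which $s_1$ appears in column ${\bf k}_1$ of $\mathbf{P}$; since $\mathcal{A}_{({\bf k}_1,k_2)}$ depends only on ${\bf k}_1$, the intersection $\mathcal{I}_s^{(\mathbf{L})}=\mathcal{I}_{s_1}^{(\mathbf{P})}$, which is nonempty by the CPDA property of $\mathbf{P}$.

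The crux of the combinatorial verification is C3. Suppose that $l_{(j_1,j_2),({\bf k}_1,k_2)}=l_{(j_1',j_2'),({\bf k}_1',k_2')}=s_2+(s_1-1)S_2$ for two distinct positions; by the decomposition this forces $p_{j_1,{\bf k}_1}=p_{j_1',{\bf k}_1'}=s_1$ and $a_{j_2,k_2}=a_{j_2',k_2'}=s_2$. I would then split cases according to whether $(j_1,{\bf k}_1)=(j_1',{\bf k}_1')$ or $(j_2,k_2)=(j_2',k_2')$: coincidence in exactly one of these pairs with the other distinct would place two integers $s_1$ in a common row or column of $\mathbf{P}$ (resp.\ $s_2$ in $\mathbf{A}$), violating C3-a of the underlying PDA; coincidence in both collapses the positions. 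Thus both sub-indices must differ, yielding $(j_1,j_2)\neq(j_1',j_2')$ and $({\bf k}_1,k_2)\neq({\bf k}_1',k_2')$. Applying C3-b of $\mathbf{P}$ gives $p_{j_1,{\bf k}_1'}=p_{j_1',{\bf k}_1}=*$, and this single $*$ in the outer component forces both $l_{(j_1,j_2),({\bf k}_1',k_2')}=*$ and $l_{(j_1',j_2'),({\bf k}_1,k_2)}=*$ via \eqref{const-2}, as required.

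For the scheme parameters, I would construct the caching scheme in two nested layers. At the outer layer, invoke Lemma \ref{le-coded CPDA} on $\mathbf{P}$ to split each file into $F_1$ MDS-coded packets of size $B/(F_1-Z'_1)$ bits, with user $({\bf k}_1,k_2)$ caching the $Z_1-Z'_1$ packets indexed by the non-useless stars of column ${\bf k}_1$ of $\mathbf{P}$. At the inner layer, further split each of the $F_1-Z_1$ non-cached coded packets into $F_2$ sub-packets according to $\mathbf{A}$, caching those sub-packets where $a_{j_2,k_2}=*$; the aggregate cached fraction equals $\frac{Z_1-Z'_1}{F_1-Z'_1}+\bigl(1-\frac{Z_1-Z'_1}{F_1-Z'_1}\bigr)\frac{Z_2}{F_2}$. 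For delivery, each pair $(s_1,s_2)\in[S_1]\times[S_2]$ triggers one XOR message, which by C4 of $\mathbf{P}$ need only reach the $\mu$ relays in $\mathcal{I}_{s_1}^{(\mathbf{P})}$; splitting into $\mu$ sub-pieces gives total subpacketization $\mu(F_1-Z'_1)F_2$. Counting per relay, since each $h\in[H]$ lies in exactly $\nu$ of the sets $\mathcal{I}_{s_1}^{(\mathbf{P})}$ with $\nu H=\mu S_1$, the per-relay load is $\nu S_2/(\mu(F_1-Z'_1)F_2)=S_1 S_2/(H(F_1-Z'_1)F_2)$.

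The main obstacle I anticipate is the case analysis for C3: two entries of $\mathbf{L}$ can agree via matching outer integer components $s_1$ and inner integer components $s_2$ independently, and ruling out spurious coincidences (two integers sharing a row or column of $\mathbf{P}$ or $\mathbf{A}$) requires invoking C3-a of the underlying PDAs at the correct granularity before C3-b can be propagated through \eqref{const-2}. Once this case analysis is laid out cleanly, the remaining computations for memory ratio, subpacketization and per-relay load reduce to straightforward bookkeeping between the outer MDS layer and the inner PDA layer.
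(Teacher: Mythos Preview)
Your argument for C3 contains a genuine gap. You claim that coincidence in exactly one of the index pairs $(j_1,{\bf k}_1)=(j_1',{\bf k}_1')$ or $(j_2,k_2)=(j_2',k_2')$ ``would place two integers $s_1$ in a common row or column of $\mathbf{P}$ (resp.\ $s_2$ in $\mathbf{A}$), violating C3-a,'' and hence that both sub-index pairs must differ. This is false: if, say, $(j_1,{\bf k}_1)=(j_1',{\bf k}_1')$ while $(j_2,k_2)\neq(j_2',k_2')$, only one position of $\mathbf{P}$ is involved and no C3-a violation occurs anywhere; this case is perfectly legitimate and must be handled, not excluded. Moreover, in this case $p_{j_1,{\bf k}_1'}=p_{j_1,{\bf k}_1}=s_1$ is an integer, so your appeal to C3-b of $\mathbf{P}$ alone cannot produce the required stars in $\mathbf{L}$. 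One must instead invoke C3 of $\mathbf{A}$ on the distinct positions $(j_2,k_2)\neq(j_2',k_2')$ to obtain $a_{j_2,k_2'}=a_{j_2',k_2}=*$, which then forces the cross entries of $\mathbf{L}$ to be stars via \eqref{const-2}. The paper argues disjunctively: either the $\mathbf{P}$-positions coincide or C3 of $\mathbf{P}$ gives cross-stars, and likewise for $\mathbf{A}$; since both coincidences cannot hold simultaneously, at least one of the two arrays supplies the needed stars.

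Your derivation of the scheme parameters via a two-layer construction (an outer MDS layer driven by $\mathbf{P}$ and an inner uncoded layer driven by $\mathbf{A}$) is a valid alternative to the paper's route. The paper instead observes that the $Z_1'$ useless stars in each column of $\mathbf{P}$ induce $Z_1'F_2$ useless stars in each column of $\mathbf{L}$ and applies Lemma~\ref{le-coded CPDA} directly to $\mathbf{L}$. Both approaches yield the same memory ratio, subpacketization, and per-relay load; your hierarchical description has the advantage of making the placement and delivery decomposition across the two layers more explicit, while the paper's approach is shorter once Lemma~\ref{le-coded CPDA} is available.
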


From Theorem \ref{th-3}, the following result can be obtained based on the CPDA in Theorem \ref{th-2} and the $\left(K_2,{K_2\choose t_2},{K_2-1\choose t_2-1},{K_2\choose t_2+1}\right)$ MN PDA for any $K_2,t_2\in\mathbb{Z}^+$ with $t_2<K_2$.
\begin{theorem}(Scheme B)
\label{th-4}
For any positive integers $H$, $r$, $a$, $\omega$, $\lambda$, $t_2$ and $K_2$ with $\max\{\omega,\lambda\}\leq r<H$, $a<H$ and $t_2<K_2$, there exists an $(H,r,u=K_2{r\choose \omega},M,N)$ multiaccess combination network caching scheme with the memory ratio, subpacketization and transmission load for each relay being
\begin{equation}
\label{eq-memory-ratio-concatenating}
\frac{M}{N}=1-\frac{K_2-t_2}{K_2}\cdot\frac{\sum\limits_{\lambda_1\in [x:y]}{\omega\choose \lambda_1}{r-\omega\choose \lambda-\lambda_1}{H-r\choose a-\omega+2\lambda_1-\lambda}}{{H\choose a}-\sum\limits_{\lambda'\in[0:\lambda-1]}\sum\limits_{\lambda_1\in[x':y']}{\omega\choose \lambda_1}\times {r-\omega\choose \lambda'-\lambda_1}\times{H-r\choose a-\omega+2\lambda_1-\lambda'}},
\end{equation}
\begin{equation}
\label{eq-subpacketization-concatenating}
F=\lambda{K_2\choose t_2}\left({H\choose a}-\sum\limits_{\lambda'\in[0:\lambda-1]}\sum\limits_{\lambda_1\in[x':y']}{\omega\choose \lambda_1}\times {r-\omega\choose \lambda'-\lambda_1}\times{H-r\choose a-\omega+2\lambda_1-\lambda'}\right),
\end{equation}
\begin{equation}
\label{eq-rate-concatenating}
R=\frac{K_2-t_2}{t_2+1}\cdot\frac{\sum\limits_{\lambda_1\in[x:y]}{H\choose a+r-2\omega+2\lambda_1-\lambda}
{H-(a+r-2\omega+2\lambda_1-\lambda)\choose \lambda_1}
{a+r-2\omega+2\lambda_1-\lambda\choose \lambda- \lambda_1}}{H\left({H\choose a}-\sum\limits_{\lambda'\in[0:\lambda-1]}\sum\limits_{\lambda_1\in[x':y']}{\omega\choose \lambda_1}\times {r-\omega\choose \lambda'-\lambda_1}\times{H-r\choose a-\omega+2\lambda_1-\lambda'}\right)},
\end{equation}
respectively, where $x=\max\{0,\lambda-r+\omega\}$, $y=\min\{\omega,\lambda\}$, $x'=\max\{0,\lambda'-r+\omega\}$ and $y'=\min\{\omega,\lambda'\}$.
\hfill $\square$
\end{theorem}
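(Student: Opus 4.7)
The plan is to recognize Theorem \ref{th-4} as a direct specialization of the hybrid construction in Theorem \ref{th-3}, using the CPDA from Theorem \ref{th-2} as the outer array $\mathbf{P}$ and the MN PDA from Lemma \ref{le-MN-PDA} as the inner PDA $\mathbf{A}$. So the entire proof is a parameter substitution, provided we can verify that the hypotheses of Theorem \ref{th-3} are met by the CPDA of Theorem \ref{th-2}.

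First, I would fix the outer array as the $\bigl(\binom{r}{\omega}\binom{H}{r},\,\binom{H}{a},\,Z_1,\,S_1\bigr)$ CPDA guaranteed by Theorem \ref{th-2}, with
\[
F_1=\binom{H}{a},\qquad Z_1=F_1-\sum_{\lambda_1\in[x:y]}\binom{\omega}{\lambda_1}\binom{r-\omega}{\lambda-\lambda_1}\binom{H-r}{a-\omega+2\lambda_1-\lambda},
\]
and $S_1$ as in Theorem \ref{th-2}. I would fix the inner array as the $\bigl(K_2,\binom{K_2}{t_2},\binom{K_2-1}{t_2-1},\binom{K_2}{t_2+1}\bigr)$ MN PDA from Lemma \ref{le-MN-PDA}, so that $F_2=\binom{K_2}{t_2}$, $Z_2=\binom{K_2-1}{t_2-1}$, $S_2=\binom{K_2}{t_2+1}$. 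Plugging these into Construction \ref{constr-2} and invoking Theorem \ref{th-3} would immediately yield a CPDA on $uK_2\binom{H}{r}$ columns with $u=\binom{r}{\omega}$, hence serving $K_2\binom{r}{\omega}\binom{H}{r}$ users, matching the multiplier $u=K_2\binom{r}{\omega}$ claimed.

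Second, I need to identify three quantities attached to the outer CPDA before applying Theorem \ref{th-3}: the common size $\mu$ of each $\mathcal{I}_{s_1}$, the constancy of the multiplicity with which each relay $h\in[H]$ appears across the $\mathcal{I}_{s_1}$, and the number $Z'_1$ of useless stars per column. These are exactly the three structural facts established in the proof of Theorem \ref{th-2} (and deferred to Appendix \ref{appendix-th-2}); reading off the formulas in Theorem \ref{th-2}, one gets $\mu=\lambda$ and
\[
F_1-Z'_1=\binom{H}{a}-\sum_{\lambda'\in[0:\lambda-1]}\sum_{\lambda_1\in[x':y']}\binom{\omega}{\lambda_1}\binom{r-\omega}{\lambda'-\lambda_1}\binom{H-r}{a-\omega+2\lambda_1-\lambda'}.
\]
The memory ratio of the Scheme A part is $\alpha:=\frac{Z_1-Z'_1}{F_1-Z'_1}$, which from Theorem \ref{th-2} satisfies $1-\alpha=\frac{\sum_{\lambda_1\in[x:y]}\binom{\omega}{\lambda_1}\binom{r-\omega}{\lambda-\lambda_1}\binom{H-r}{a-\omega+2\lambda_1-\lambda}}{F_1-Z'_1}$.

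Third, I would just substitute into the three formulas of Theorem \ref{th-3}. For the memory ratio,
\[
\frac{M}{N}=\alpha+(1-\alpha)\frac{Z_2}{F_2}=\alpha+(1-\alpha)\frac{t_2}{K_2}=1-(1-\alpha)\frac{K_2-t_2}{K_2},
\]
which upon inserting $1-\alpha$ gives exactly \eqref{eq-memory-ratio-concatenating}. For the subpacketization, $F=\mu(F_1-Z'_1)F_2=\lambda(F_1-Z'_1)\binom{K_2}{t_2}$, which is \eqref{eq-subpacketization-concatenating}. For the transmission load,
\[
R=\frac{S_1 S_2}{H(F_1-Z'_1)F_2}=\frac{S_1}{H(F_1-Z'_1)}\cdot\frac{\binom{K_2}{t_2+1}}{\binom{K_2}{t_2}}=\frac{S_1}{H(F_1-Z'_1)}\cdot\frac{K_2-t_2}{t_2+1},
\]
which after substituting $S_1$ from Theorem \ref{th-2} matches \eqref{eq-rate-concatenating}.

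The only genuinely non-trivial step is the second one: confirming that the CPDA produced by Construction \ref{constr-1} satisfies the two constant-size/constant-multiplicity hypotheses of Theorem \ref{th-3} and exhibiting $Z'_1$ explicitly. This is, however, transparent by the symmetry of the construction: both the row labels $\mathbf{f}$ and the column labels $(\mathcal{T},\mathbf{b})$ are indexed by combinatorial objects on $[H]$ on which the symmetric group $S_H$ acts transitively, and each integer entry $(\mathbf{e},\mathcal{C})$ is determined by a pair whose stabilizer structure is independent of the particular choice of $s_1$; hence $|\mathcal{I}_{s_1}|=|\mathcal{C}|=\lambda$ is automatic from \eqref{eq-c}, and the relay multiplicities coincide by the $S_H$-symmetry. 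Granting these facts (formally verified in Appendix \ref{appendix-th-2}), the proof is a mechanical substitution.
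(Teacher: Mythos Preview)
Your proposal is correct and follows exactly the paper's approach: the paper states immediately before Theorem~\ref{th-4} that the result is obtained from Theorem~\ref{th-3} by taking the CPDA of Theorem~\ref{th-2} together with the $\bigl(K_2,\binom{K_2}{t_2},\binom{K_2-1}{t_2-1},\binom{K_2}{t_2+1}\bigr)$ MN PDA, and gives no further argument. Your verification of the constant-$\mu$, constant-multiplicity, and $Z'_1$ hypotheses via the proof of Theorem~\ref{th-2} in Appendix~\ref{appendix-th-2} is precisely what the paper relies on implicitly.
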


\section{Performance analysis}
\label{sect-performance}
In this section, we compare Scheme A in Theorem \ref{th-2} and Scheme B in Theorem \ref{th-4} with ZY scheme in Lemma \ref{le-MN-PDA} and the naive scheme of repeatedly using the scheme in \cite{ZCWZ} (referred to as Scheme C).

\subsection{Analytic Comparison of Scheme A,B with ZY scheme}
\label{subsect-Th2-MDS}
Since it is difficult to give an analytic comparison of Scheme A and Scheme B with ZY scheme for general parameters $\lambda$ and $\omega$, we focus on the case of $\lambda=1, \omega=1$. In this case, the memory ratio, subpacketization and transmission load for each relay of Scheme A are
\begin{eqnarray*}
\frac{M}{N}&=&1-\frac{{H-r\choose a}+\left(r-1\right){H-r\choose a-2}}{{H\choose a}-{H-r\choose a-1}},\ \ \ \ \ \ F_{Th1}={H\choose a}-{H-r\choose a-1},\\
R_{Th1}&=&\frac{(1-\frac{M}{N})}{H}\cdot\frac{{H\choose a+r-1}\left(H-a-r+1\right)+{H\choose a+r-3}\left(a+r-3\right)}{{H-r\choose a}+\left(r-1\right){H-r\choose a-2}}
\end{eqnarray*}
respectively from \eqref{eq-memory-ratio}, \eqref{eq-subpacketization} and \eqref{eq-rate}; from \eqref{eq-memory-ratio-concatenating}, \eqref{eq-subpacketization-concatenating} and \eqref{eq-rate-concatenating}, Scheme B has memory ratio, subpacketization and transmission load for each relay as follows,

\begin{eqnarray*}
\label{eq-theorem-4}
\begin{split}
\frac{M}{N}&=1-\frac{K_2-t_2}{K_2}\cdot\frac{{H-r\choose a}+\left(r-1\right){H-r\choose a-2}}{{H\choose a}-{H-r\choose a-1}},\ \ \ \ \ \ F_{Th3}={K_2\choose t_2}\left({H\choose a}-{H-r\choose a-1}\right),&\\
R_{Th3}&=\frac{K_2(1-\frac{M}{N})}{H(t_2+1)}\cdot\frac{{H\choose a+r-1}\left(H-a-r+1\right)+{H\choose a+r-3}\left(a+r-3\right)}{{H-r\choose a}+\left(r-1\right){H-r\choose a-2}}.&
\end{split}
\end{eqnarray*}
While ZY scheme has subpacketization and transmission load for each relay as follows,
\begin{eqnarray*}
F_{ZY}=r{u{H-1\choose r-1}\choose \frac{u{H-1\choose r-1}M}{N}},\ \ \ \ R_{ZY}=\frac{u{H-1\choose r-1}(1-\frac{M}{N})}{r\left(\frac{u{H-1\choose r-1}M}{N}+1\right)}.
\end{eqnarray*}
So we have
\begin{eqnarray*}
\label{Th2-R-2}
\begin{split}
\frac{R_{Th1}}{R_{ZY}}&=\left(\frac{rM}{HN}+\frac{1}{K}\right)\frac{{H\choose a+r-1}\left(H-(a+r-1)\right)+{H\choose a+r-3}\left(a+r-3\right)}{{H-r\choose a}+\left(r-1\right){H-r\choose a-2}}&\\
&=\left(\frac{rM}{HN}+\frac{1}{K}\right)H\left(H-1\right)\cdots\left(H-r+1\right)\frac{\frac{\left(H-a-r+3\right)\left(H-a-r+2\right)\left(H-a-r+1\right)}{\left(a+r-1\right)!}+\frac{1}{\left(a+r-4\right)!}}{\frac{\left(H-a-r+3\right)\left(H-a-r+2\right)\left(H-a-r+1\right)}{a!}+\left(r-1\right)\frac{H-r-a+3}{\left(a-2\right)!}}&\\
&<\left(\frac{rM}{HN}+\frac{1}{K}\right)\frac{H\left(H-1\right)\cdots\left(H-r+1\right)}{\left(a+r-1\right)\cdots\left(a+1\right)}&
\end{split}
\end{eqnarray*}
and
\begin{equation*}
\label{Th4-R-2}
\frac{R_{Th3}}{R_{ZY}}<\frac{K_2}{t_2+1}\left(\frac{rM}{HN}+\frac{1}{K}\right)\frac{H\left(H-1\right)\cdots\left(H-r+1\right)}{\left(a+r-1\right)\cdots\left(a+1\right)}
\end{equation*}
when $H$ is large enough.
And we have
\begin{eqnarray*}
\frac{F_{Th1}}{F_{ZY}}&=\frac{{H\choose a}-{H-r\choose a-1}}{r{u{H-1\choose r-1}\choose \frac{u{H-1\choose r-1}M}{N}}}<\frac{{H\choose a}-{H-r\choose a-1}}{\left(\frac{N}{M}\right)^{\frac{r{H-1\choose r-1}M}{N}}}\ \ \ \text{and}\ \ \ \frac{F_{Th3}}{F_{ZY}}&<\frac{{K_2\choose t_2}\left({H\choose a}-{H-r\choose a}\right)}{\left(\frac{N}{M}\right)^{\frac{K_2r{H-1\choose r-1}M}{N}}}.
\end{eqnarray*}
Clearly, for Scheme A and Scheme B, the transmission load for each relay is just multiplied while the subpacketization is reduced exponentially compared to ZY scheme. Moreover, it is easy to verify that $\left(\frac{rM}{HN}+\frac{1}{K}\right)\frac{H\left(H-1\right)\cdots\left(H-r+1\right)}{\left(a+r-1\right)\cdots\left(a+1\right)}\ll \frac{r{H-1\choose r-1}M}{N}$ and $\frac{K_2}{t_2+1}\left(\frac{rM}{HN}+\frac{1}{K}\right)\frac{H\left(H-1\right)\cdots\left(H-r+1\right)}{\left(a+r-1\right)\cdots\left(a+1\right)}\ll \frac{K_2r{H-1\choose r-1}M}{N}$, i.e., in Scheme A and Scheme B the growth multiple of transmission loads for each relay are much less than the reduction exponent of the subpacketization.
\subsection{Numerical comparison}
The numerical comparison of Scheme A and Scheme B with ZY scheme is given in Table \ref{tab:2}. It is easy to see from Table \ref{tab:2} that $\frac{R_{Th1}}{R_{ZY}}\ll \ln\frac{F_{ZY}}{F_{Th1}}$ and $\frac{R_{Th3}}{R_{ZY}}\ll \ln\frac{F_{ZY}}{F_{Th3}}$, which coincides with the analytic comparison.
\begin{table}[H]
\caption{Numerical comparison of Scheme A and Scheme B when $\omega=1,\lambda=1$ with ZY scheme}
	\centering
	\begin{tabular}{|l|c|c|c|c|c|c|}\hline
			$(H,r,a)$&K&$\frac{M}{N}$&$\frac{R_{Th1}}{R_{ZY}}$&$\ln\frac{F_{ZY}}{F_{Th1}}$\\\hline
			15,2,4&210&0.27&1.59&11.75\\
			16,3,4&1680&0.43&8.19&297.35\\
			18,3,6&2448&0.50&5.98&391.00\\
			20,3,12&3420&0.56&4.03&487.29\\\hline
		\end{tabular}	\begin{tabular}{|l|c|c|c|c|c|c|}\hline
			$(H,r,a,K_2,t_2)$&K&$\frac{M}{N}$&$\frac{R_{Th3}}{R_{ZY}}$&$\ln\frac{F_{ZY}}{F_{Th3}}$\\\hline
		     20,3,3,8,1& 27360& 0.38& 53.68& 474.64\\
            20,3,3,6,1& 20520& 0.41& 43.39& 484.46\\
            20,3,3,5,1& 17100 &0.43 &38.24& 490.37\\
            20,3,3,3,1& 10260& 0.53&  27.95&497.17\\\hline
	\end{tabular}
		\label{tab:2}
\end{table}Finally, when $H=14$, $r=4$, $u=6$ and $K=u{H\choose r}=6006$, the memory-load and memory-subpacketization tradeoffs are given in Fig.\ref{fig:numerical}.
From Fig.\ref{fig:numerical}(a), we can see that for the same number of users and memory ratio, the schemes with transmission load for each relay from small to large are ZY scheme, Scheme B, Scheme A and Scheme C in turn. That is because ZY scheme fully utilizes the multicast opportunities among the users who are connected to the same relay, then its coded caching gain is the largest, so it has the minimum transmission load for each relay; Scheme A and Scheme C only utilizes the multicast opportunities among the users who are connected to different sets of $r$ relays, while omits the multicast opportunities among the users who are connected to the same $r$ relays, and Scheme A utilizes the multicast opportunities among the users who are connected to different sets of $r$ relays more efficiently than Scheme C, so the transmission load for each relay of Scheme A is lower than that of Scheme C; Scheme B not only utilizes the multicast opportunities among the users who are connected to different sets of $r$ relays as Scheme A, but also utilizes the multicast opportunities among the users who are connected to the same $r$ relays, so the transmission load for each relay of Scheme B is lower than that of Scheme A.
From Fig.\ref{fig:numerical}(b), we can see that for the same number of users and memory ratio, the schemes with subpacketization from small to large are Scheme A, Scheme C, Scheme B and ZY scheme in turn. So Scheme A and Scheme B have significant advantages in subpacketization compared with ZY scheme and have significant advantages in transmission load for each relay compared with Scheme C. Moreover, the subpacketization of Scheme A is lower than that of Scheme C.
\begin{figure}[http!]
    \centering
    \begin{subfigure}[t]{0.5\textwidth}
        \centering
        \includegraphics[scale=0.38]{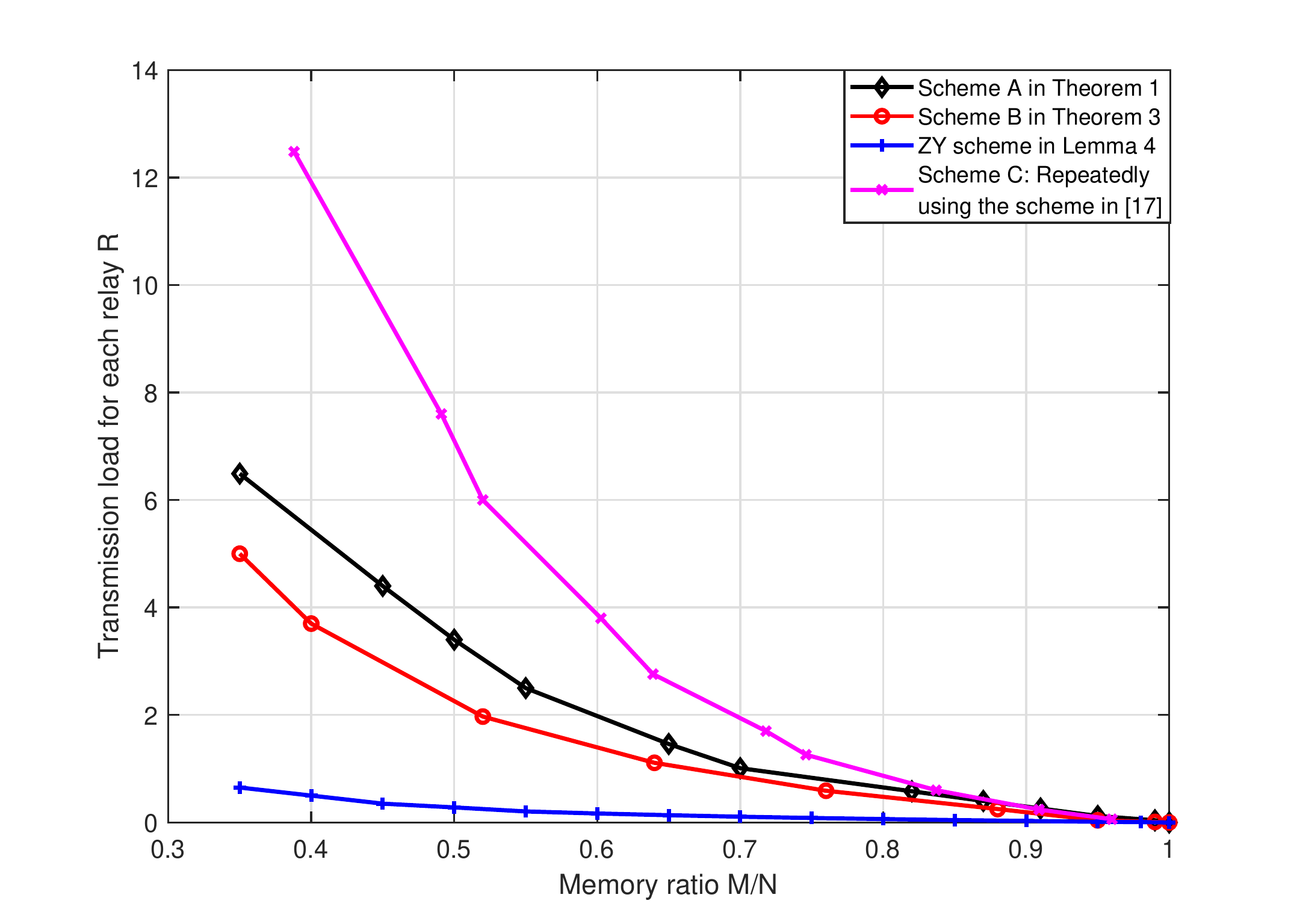}
        \caption{\small  Transmission load for each relay.}
        \label{a}
    \end{subfigure}%
    ~
    \begin{subfigure}[t]{0.5\textwidth}
        \centering
        \includegraphics[scale=0.38]{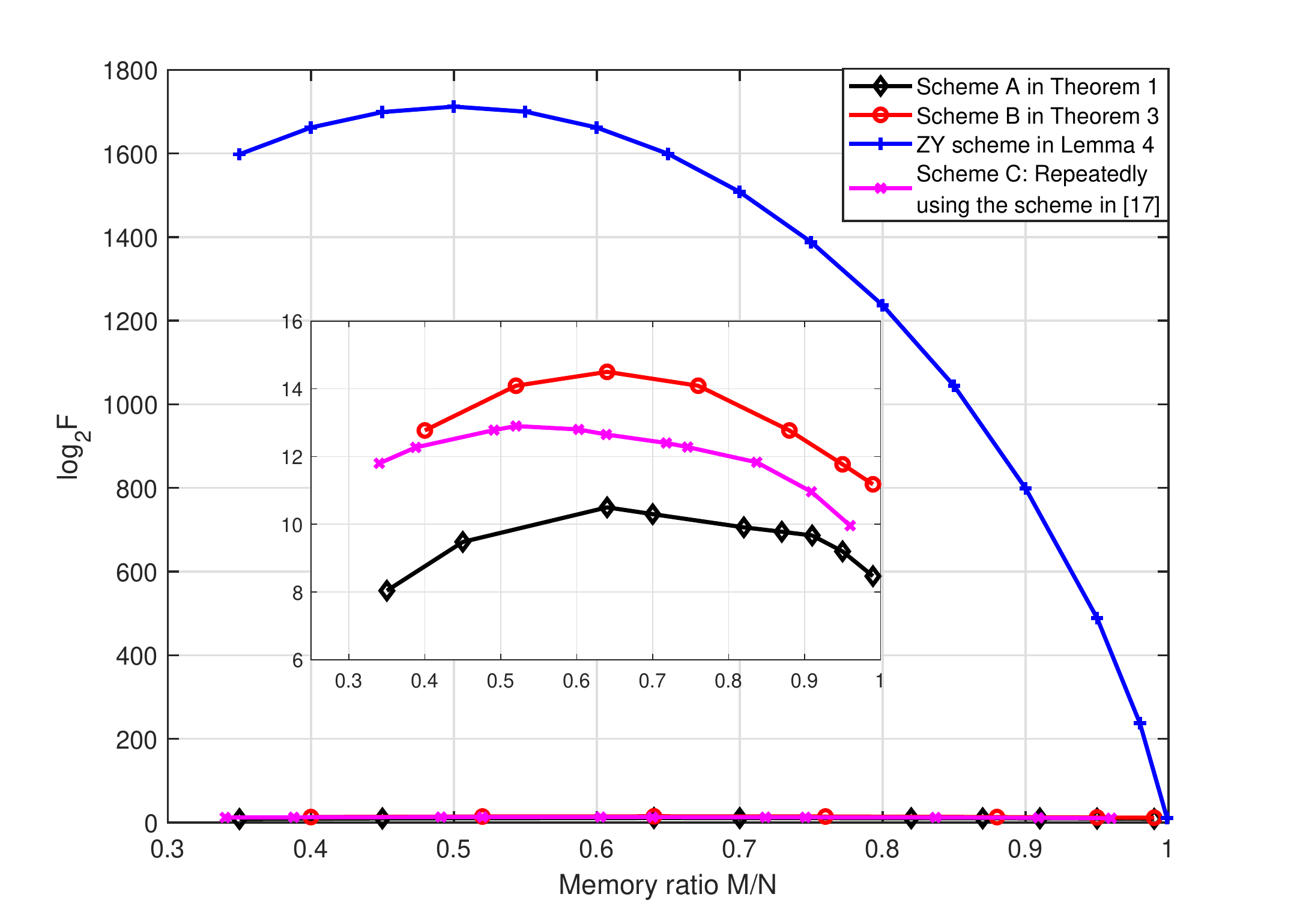}
        \caption{\small Subpacketization. }
        \label{b}
    \end{subfigure}
    \caption{\small Comparison of Scheme A and Scheme B with ZY scheme and Scheme C when $H=14$, $r=4$, $u=6$.}
    \label{fig:numerical}
\end{figure}

\section{conclusion}
\label{sec-conclusion}

In this paper, we extended a traditional $(H, r)$ combination network to an $(H,r,u)$ multiaccess combination network, where each $u$ users are connected to a unique set of $r$ relays, and proposed three schemes for such a network, i.e., ZY scheme, Scheme A and Scheme B. The transmission load for each relay of ZY scheme obtained from the idea in \cite{ZY2} is approximately $\frac{1}{r}$ of the transmission load of MN scheme, but its subpacketization increases exponentially with the number of users. Scheme A obtained by a direct construction of CPDA has significant advantages in subpacketization compared with ZY scheme, where the parameter $u$ must be a combinational number. Scheme B obtained by a hybrid construction of CPDA works for arbitrary parameter $u$ and also has significant advantages in subpacketization compared with ZY scheme. In addition, Scheme A and Scheme B have significant advantages in transmission load for each relay compared with Scheme C of repeatedly using the scheme in \cite{ZCWZ}. Moreover, the subpacketization of Scheme A is lower than that of Scheme C. It is worth noting that the hybrid construction is based on a given CPDA and PDA, and the memory ratio of the resulting CPDA is greater than that of the original CPDA. It is great significance to look for a direct construction of CPDA which works for arbitrary parameter $u$.

\appendices
\section{Proof of Theorem~\ref{th-2}}
\label{appendix-th-2}

\begin{proof}Let the array generated by Construction \ref{constr-1} be $\mathbf{P}=(p_{\mathbf{f},\mathbf{k}})$ where $\mathbf{f}\in \mathcal{F}$ and $\mathbf{k}\in \mathcal{K}$. For any non-star entry $(\mathbf{e},\mathcal{C})$ occurring in $\mathbf{P}$, i.e., there exists some $\mathbf{f}\in \mathcal{F}$ and $\mathbf{k}=(\mathcal{T} , \mathbf{b})\in \mathcal{K}$ such that $p_{\mathbf{f},(\mathcal{T}, \mathbf{b})}=(\mathbf{e},\mathcal{C})$. Let $\mathcal{C}=\mathcal{C}_1\bigcup \mathcal{C}_2$ where $\mathcal{C}_1$ is the subset of $\mathcal{C}$ such that $f_i=0$ for each $i\in \mathcal{C}_1$ and $\mathcal{C}_2$ is the subset of $\mathcal{C}$ such that $f_i=1$ for each $i\in \mathcal{C}_2$. Let $|\mathcal{C}_1|=\lambda_1$, then $|\mathcal{C}_2|=\lambda-\lambda_1$ since $\mathcal{C}_1\bigcap\mathcal{C}_2=\emptyset$. In addition, $0\leq \lambda_1\leq \omega$ and $0\leq \lambda-\lambda_1\leq r-\omega$. This implies that $\max\{0,\lambda-r+\omega\}\leq \lambda_1\leq \min\{\lambda,\omega\}$.
Let $x=\max\{0,\lambda-r+\omega\}$ and $y=\min\{\lambda,\omega\}$, then $\lambda_1\in[x:y]$.

Next we will prove that the array $\mathbf{P}$ satisfies the conditions of Definition \ref{def-CPDA}. Let us consider C1 of  Definition \ref{def-PDA} first. For any column label ${\bf k}=(\mathcal{T},{\bf b})$, without loss of generality, assume that
${\bf b}=(b_1,b_2,\ldots,b_{r})$ satisfying $b_1=b_2=\ldots=b_{\omega}=0$ and $b_{\omega+1}=\ldots=b_{r}=1$.
Let $\mathcal{T}=\{ \delta_1,\delta_2,\ldots, \delta_{r}\}$ with $\delta_1<\delta_2<\ldots< \delta_{r}$. From \eqref{eq-defin-array} there are
\begin{eqnarray}
\label{eq-number of nonstars}
\overline{Z}=\sum\limits_{\lambda_1\in [x:y]}\underbrace{{\omega\choose \lambda_1}}_{X_1}\times \underbrace{{r-\omega\choose \lambda-\lambda_1}}_{X_2}\times\underbrace{{H-r\choose a-\omega+2\lambda_1-\lambda}}_{X_3}
\end{eqnarray}
vectors ${\bf f}=(f_1,f_2,\ldots,f_{H})\in \mathcal{F}$ satisfying $\text{d}({\bf f}|_{\mathcal{T}},{\bf b})=r-\lambda$, where
 $X_1$ is the number of subvectors ${\bf f}|_{\{ \delta_1,\ldots, \delta_{\omega}\}}$ satisfying $\text{d}({\bf f}|_{\{ \delta_1,\ldots, \delta_{\omega}\}},{\bf b}|_{[1:\omega]})=\omega-\lambda_1$;
 $X_2$ is the number of subvectors ${\bf f}|_{\{ \delta_{\omega+1},\ldots, \delta_{r}\}}$ satisfying $\text{d}({\bf f}|_{\{ \delta_{\omega+1},\ldots, \delta_{r}\}},{\bf b}|_{[\omega+1:r]})=r-\omega-(\lambda-\lambda_1)$; and
 $X_3$ is the number of subvectors ${\bf f}|_{[H]\setminus\mathcal{T}}$ satisfying $\text{wt}({\bf f}|_{[H]\setminus\mathcal{T}})=a-(\omega-\lambda_1)-(\lambda-\lambda_1)=a-\omega+2\lambda_1-\lambda$.
Thus, there are $Z={H\choose a}-\overline{Z}$ stars in each column. So the condition C1 holds.

For any two distinct entries, say $p_{\mathbf{f},(\mathcal{T} , \mathbf{b})}$ and $p_{\mathbf{f}^\prime,(\mathcal{T}^\prime, \mathbf{b}^\prime)}$, assume that $p_{\mathbf{f},(\mathcal{T}, \mathbf{b})}=p_{\mathbf{f}^\prime,(\mathcal{T}^\prime , \mathbf{b}^\prime)}=(\mathbf{e},\mathcal{C})$. Denote
\begin{eqnarray*}
\begin{array}{ccc}
\mathbf{f}=(f_1, f_2, \ldots, f_{H}), & \mathcal{T}=\{\delta_1, \delta_2, \ldots, \delta_{r}\}, &\mathbf{b}=(b_1,b_2, \ldots,b_{r}), \\
\mathbf{f}^\prime= (f^\prime_1, f^\prime_2, \ldots, f^\prime_{H}), &\mathcal{T}^\prime =\{ \delta^\prime_1, \delta^\prime_2,\ldots, \delta^\prime_{r}\}, & \mathbf{b}^\prime = (b^\prime_1,b^\prime_2, \ldots,b^\prime_{r}).
\end{array}
\end{eqnarray*}
If $\mathcal{T} = \mathcal{T}^\prime$, we have ${\bf e}|_{\mathcal{T}}={\bf b}={\bf b}^\prime$ from \eqref{eq-defin-array}. Then we have
$\mathbf{f}=\mathbf{f}^\prime$ since
\begin{eqnarray*}
{\bf e}|_{[H]\setminus\mathcal{T}}={\bf f}|_{[H]\setminus\mathcal{T}}={\bf f}'|_{[H]\setminus\mathcal{T}}, \ \
{\bf e}|_{\mathcal{C}}={\bf f}|_{\mathcal{C}}={\bf f}'|_{\mathcal{C}},\ \
{\bf f}|_{\mathcal{T}\setminus\mathcal{C}}\neq {\bf e}|_{\mathcal{T}\setminus\mathcal{C}},\ \
{\bf f}'|_{\mathcal{T}\setminus\mathcal{C}}\neq {\bf e}|_{\mathcal{T}\setminus\mathcal{C}}.
\end{eqnarray*} This contradicts our hypothesis\footnote{\label{foot:1} It implies that for any $r$-relay set $\mathcal{T}\in {[H]\choose r}$, each non-star entry $(\mathbf{e},\mathcal{C})$ in $\mathbf{P}$ occurs at most once in the columns $\mathbf{k}=(\mathcal{T},\mathbf{b})$ of $\mathbf{P}$ where ${\bf b}\in \{0,1\}^r$ and $\text{wt}({\bf b})=r-\omega$. So the multicast opportunities among the users who are connected to the same $r$ relays are not utilized.}. So we have $\mathcal{T}\neq \mathcal{T}^\prime$, then there must exist two distinct integers, say $i$, $i'\in[H]$, satisfying
$i\in \mathcal{T}\setminus\mathcal{C},\  i\not\in\mathcal{T}'$ and $i'\in \mathcal{T}'\setminus\mathcal{C},\  i'\not\in\mathcal{T}$.
Without loss of generality, assume that $i=\delta_1$ and $i'=\delta'_1$. From Construction \ref{constr-1}, we have
$f_{\delta_1}\neq b_{1}=e_{\delta_1}=f'_{\delta_1}$ and $f'_{\delta'_1}\neq b'_1 =e_{\delta'_1}=f_{\delta'_1}$.
From \eqref{eq-defin-array}, \eqref{eq-entry} and \eqref{eq-c} we have $\mathcal{C}\subseteq\mathcal{T}\bigcap \mathcal{T}'$ and ${\bf f}|_{\mathcal{C}}={\bf b}|_{\mathcal{C}}={\bf e}|_{\mathcal{C}}={\bf f'}|_{\mathcal{C}}={\bf b'}|_{\mathcal{C}}$. Then $d({\bf f}|_{\mathcal{T}'},{\bf b}')<r-\lambda$ and $d({\bf f}'|_{\mathcal{T}},{\bf b})<r-\lambda$ hold since $|\mathcal{C}|=\lambda$. So we have $p_{{\bf f}, (\mathcal{T}',{\bf b}')}=p_{{\bf f}', (\mathcal{T},{\bf b})}=*$ from \eqref{eq-defin-array}. Clearly ${\bf f}\neq {\bf f}'$ holds. So the condition C$3$ of Definition \ref{def-PDA} holds.

Furthermore, it implies that each useful star $p_{{\bf f}, (\mathcal{T},{\bf b})}=*$  satisfies $d({\bf f}|_{\mathcal{T}},{\bf b})<r-\lambda$. In other words, each star $p_{{\bf f}, (\mathcal{T},{\bf b})}=*$ satisfying $d({\bf f}|_{\mathcal{T}},{\bf b})>r-\lambda$ is useless. Similar to \eqref{eq-number of nonstars}, there are exactly
$Z'=\sum\limits_{\lambda'\in[0:\lambda-1]}\sum\limits_{\lambda_1\in[x':y']}{\omega\choose \lambda_1}\times {r-\omega\choose \lambda'-\lambda_1}\times{H-r\choose a-\omega+2\lambda_1-\lambda'}$ vectors ${\bf f}\in \mathcal{F}$ satisfying $\text{d}({\bf f}|_{\mathcal{T}},{\bf b})>r-\lambda$, where $x'=\max\{0,\lambda'-r+\omega \}$ and $y'=\min\{\lambda',\omega\}$. Deleting the useless stars, there are exactly $Z-Z'$ stars in each column.

For any non-star entry $(\mathbf{e},\mathcal{C})$ occurring in $\mathbf{P}$, assume that $p_{\mathbf{f},(\mathcal{T}, \mathbf{b})}=(\mathbf{e},\mathcal{C})$, from Construction \ref{constr-1} we have $\mathcal{C}\subset\mathcal{T}$, $|\mathcal{C}|=\lambda$, $e_i=f_i$ for each $i\in \mathcal{C}\cup ([H]\setminus \mathcal{T})$ and $e_i\neq f_i$ for each $i\in \mathcal{T} \setminus \mathcal{C}$. Since $\mathcal{C}=\mathcal{C}_1\bigcup \mathcal{C}_2$ where $\mathcal{C}_1$ is the subset of $\mathcal{C}$ such that $f_i=0$ for each $i\in \mathcal{C}_1$, $|\mathcal{C}_1|=\lambda_1\in[x:y]$ and $\mathcal{C}_2$ is the subset of $\mathcal{C}$ such that $f_i=1$ for each $i\in \mathcal{C}_2$, we have $wt({\bf f}|_{\mathcal{C}})=\lambda-\lambda_1$. Then $wt({\bf f}|_{\mathcal{T}\setminus \mathcal{C}})=\omega-\lambda_1$ since $wt({\bf b})=r-\omega$ and $d({\bf f}|_{\mathcal{T}},{\bf b})=r-\lambda$. So $wt({\bf f}|_{\mathcal{T}})=\lambda-\lambda_1+\omega-\lambda_1$ and $wt({\bf f}|_{[H]\setminus \mathcal{T}})=a-(\lambda+\omega-2\lambda_1)$ since $wt({\bf f})=a$. Since $wt({\bf e}|_{\mathcal{T}})=wt({\bf b})=r-\omega$ and $wt({\bf e}|_{[H]\setminus\mathcal{T}})=wt({\bf f}|_{[H]\setminus\mathcal{T}})=a-(\lambda+\omega-2\lambda_1)$, we have $wt({\bf e})=r-\omega+a-(\lambda+\omega-2\lambda_1)=a+r-2\omega+2\lambda_1-\lambda$.

    For any $\lambda_1\in [x:y]$ and for each vector ${\bf e}$ with wt$({\bf e})=a+r-2\omega+2\lambda_1-\lambda$, let us consider any possible coordinate sets $\mathcal{C}_1$ and $\mathcal{C}_2$ such that 1) each coordinate of ${\bf e}|_{\mathcal{C}_1}$ is zero; 2) each coordinate of ${\bf e}|_{\mathcal{C}_2}$ is one; 3) $|\mathcal{C}_1|=\lambda_1$ and $|\mathcal{C}_2|=\lambda-\lambda_1$. Let $\mathcal{C}=\mathcal{C}_1\bigcup\mathcal{C}_2$. Next we will prove that $({\bf e},\mathcal{C})$ occurs in $\mathbf{P}$. Without loss of generality, assume that $\mathcal{C}_1=[1:\lambda_1]$, $\mathcal{C}_2=[\lambda_1+1:\lambda]$ and
$e_1=e_2=\cdots=e_{\lambda_1}=0$, $e_{\lambda_1+1}=e_{\lambda_1+2}=\cdots=e_{\lambda}=\cdots=e_{a+r-2\omega+3\lambda_1-\lambda }=1$ and $e_{a+r-2\omega+3\lambda_1-\lambda+1}=e_{a+r-2\omega+3\lambda_1-\lambda+2}=\cdots=e_{H}=0$.
That is,
$$\mathbf{e}=(\underbrace{0,\ \cdots,\  0}_{\lambda_1},\underbrace{1,\ \cdots,\ 1}_{\lambda-\lambda_1},\underbrace{1,\ \ \ \ \cdots,\ \ \ \ 1,}_{a+r-2\omega+3\lambda_1-2\lambda}\ \ \underbrace{0,\ \ \ \ \ \ \cdots,\ \ \ \ \ \  0)}_{H-(a+r-2\omega+3\lambda_1-\lambda)}.$$
Define
\begin{small}
\begin{eqnarray*}
\mathfrak{T}_{\mathcal{C}}=\left\{\mathcal{C}\cup\mathcal{T}_1\cup\mathcal{T}_2 \ \Big|\
\mathcal{T}_1\in {[a+r-2\omega+3\lambda_1-\lambda+1:H]\choose \omega-\lambda_1},\mathcal{T}_2\in{[\lambda+1:a+r-2\omega+3\lambda_1-\lambda]\choose r-\omega-(\lambda-\lambda_1)}\right\}.
\end{eqnarray*}
\end{small} It is easy to check that for any $\mathcal{T}=\mathcal{C}\cup\mathcal{T}_1\cup\mathcal{T}_2\in\mathfrak{T}_{\mathcal{C}}$, we have $|\mathcal{T}|=r$, $wt({\bf e}|_{\mathcal{T}})=r-\omega$ and there is an $H$-dimensional vector {\bf f} defined as 
\begin{eqnarray*}
f_{i}=\left\{\begin{array}{cc}
        1 & \text{if}\ \ i\in \mathcal{T}_1, \\
        0 & \text{if}\ \ i\in \mathcal{T}_2,\\
        e_i & \text{otherwise}
      \end{array}\right.
\end{eqnarray*}
 satisfying $\text{wt}({\bf f})=a$, then ${\bf f}\in\mathcal{F}$, $(\mathcal{T},{\bf e}|_{\mathcal{T}})\in\mathcal{K}$ and $d({\bf f}|_{\mathcal{T}},{\bf e}|_{\mathcal{T}})=r-\lambda$. Consequently, we have $p_{{\bf f},(\mathcal{T},{\bf e}|_{\mathcal{T}})}=({\bf e}, \mathcal{C})$ by Construction \ref{constr-1}. And $({\bf e}, \mathcal{C})$ appears ${H-(a+r-2\omega+3\lambda_1-\lambda)\choose \omega-\lambda_1}{a+r-2\omega+3\lambda_1-2\lambda\choose r-\omega-(\lambda-\lambda_1)}$ times in $\mathbf{P}$. Since the intersection of any element of ${[a+r-2\omega+3\lambda_1-\lambda+1:H]\choose \omega-\lambda_1}$ and any element of ${[\lambda+1:a+r-2\omega+3\lambda_1-\lambda]\choose r-\omega-(\lambda-\lambda_1)}$ is empty, we have $\bigcap_{\mathcal{T}\in \mathfrak{T}_{\mathcal{C}}}\mathcal{T}=\mathcal{C}$. This implies that the set $\mathcal{I}_{({\bf e}, \mathcal{C})}=\mathcal{C}$. Recall that $\mathcal{I}_{({\bf e}, \mathcal{C})}$ is the intersection of the first coordinate $\mathcal{T}$ of all column labels $(\mathcal{T}, {\bf b})$ satisfying that $({\bf e}, \mathcal{C})$ appears in column $(\mathcal{T}, {\bf b})$, i.e., $\mathcal{I}_{({\bf e}, \mathcal{C})}=\cap_{p_{{\bf f},(\mathcal{T}, {\bf b})}=({\bf e}, \mathcal{C}),{\bf f}\in\mathcal{F},(\mathcal{T}, {\bf b})\in\mathcal{K}}\mathcal{A}_{(\mathcal{T}, {\bf b})}$, where $\mathcal{A}_{(\mathcal{T}, {\bf b})}$ is the set of relays which connect user $(\mathcal{T}, {\bf b})$. The condition C$4$ of Definition \ref{def-CPDA} holds.

From the above analysis, there are $S=\sum\limits_{\lambda_1\in[x:y]}{H\choose a+r-2\omega+2\lambda_1-\lambda}{H-(a+r-2\omega+2\lambda_1-\lambda)\choose \lambda_1}{a+r-2\omega+2\lambda_1-\lambda\choose \lambda- \lambda_1}$ different non-star entries in $\mathbf{P}$. The condition C2 of Definition \ref{def-PDA} holds.

Hence, $\mathbf{P}$ is a CPDA with the parameters in Theorem \ref{th-2}. For each non-star entry $({\bf e}, \mathcal{C})$ in $\mathbf{P}$, since $\mathcal{I}_{({\bf e}, \mathcal{C})}=\mathcal{C}$, we have $|\mathcal{I}_{({\bf e}, \mathcal{C})}|=\lambda$, which is a constant. Next we will prove that for each relay $h\in[H]$, the number of $\mathcal{I}_{({\bf e}, \mathcal{C})}$ containing $h$ is also a constant. In fact, the number of $\mathcal{I}_{({\bf e}, \mathcal{C})}$ containing relay $h$ is exactly
\begin{eqnarray*}
&\sum\limits_{\lambda_1\in[x:y]}{H-1\choose a+r-2\omega+2\lambda_1-\lambda}
{H-(a+r-2\omega+2\lambda_1-\lambda)-1\choose \lambda_1-1}
{a+r-2\omega+2\lambda_1-\lambda\choose \lambda- \lambda_1}+\\
&\sum\limits_{\lambda_1\in[x:y]}{H-1\choose a+r-2\omega+2\lambda_1-\lambda-1}
{H-(a+r-2\omega+2\lambda_1-\lambda)\choose \lambda_1}
{a+r-2\omega+2\lambda_1-\lambda-1\choose \lambda- \lambda_1-1}\\
=&\frac{\lambda}{H}\sum\limits_{\lambda_1\in[x:y]}{H\choose a+r-2\omega+2\lambda_1-\lambda}
{H-(a+r-2\omega+2\lambda_1-\lambda)\choose \lambda_1}
{a+r-2\omega+2\lambda_1-\lambda\choose \lambda- \lambda_1},
\end{eqnarray*}
which is also a constant.
From Lemma \ref{le-coded CPDA}, Theorem \ref{th-2} is proved.

\end{proof}

\section{Proof of Theorem~\ref{th-3}}
\label{appendix-th-3}
\begin{proof}
Firstly, we will prove that the array $\mathbf{L}$ generated by Construction \ref{constr-2} is a $(uK_2{H\choose r}, F_1F_2$, $Z_1F_2+(F_1-Z_1)Z_2$, $S_1S_2)$ CPDA. Clearly, $\mathbf{L}$ is a $uK_2{H\choose r}\times F_1F_2$ array.
Furthermore, each column of $\mathbf{L}$ has $Z_1F_2+(F_1-Z_1)Z_2$ stars since each column of $\mathbf{P}$ has $Z_1$ stars and each column of $\mathbf{A}$ has $Z_2$ stars. There are $S=S_1S_2$ integers in $\mathbf{L}$, since there are $S_1$ integers in $\mathbf{P}$ and there are $S_2$ integers in $\mathbf{A}$. The Conditions C$1$ and C$2$ of Definition \ref{def-PDA} hold.

For any two distinct entries $l_{(j_1,j_2),({\bf k_1},k_2)}$ and $l_{(j'_1,j'_2),({\bf k'_1},k'_2)}$, if $l_{(j_1,j_2),({\bf k}_1,k_2)}=l_{(j'_1,j'_2),({\bf k}'_1,k'_2)}=s\in[S]$, then all entries $a_{j_2,k_2},a_{j'_2,k'_2},p_{j_1,{\bf k}_1},p_{j'_1,{\bf k}'_1}$ are integers and $s=a_{j_2,k_2}+(p_{j_1,{\bf k}_1}-1)S_2=a_{j'_2,k'_2}+(p_{j'_1,{\bf k}'_1}-1)S_2$ from \eqref{const-2}. Since $a_{j_2,k_2}\leq S_2$ and $a_{j'_2,k'_2}\leq S_2$, we have $a_{j_2,k_2}=a_{j'_2,k'_2}\in[S_2]$ and $p_{j_1,{\bf k}_1}=p_{j'_1,{\bf k}'_1}\in[S_1]$. Then from the definition of PDA and CPDA, we have that (1) either $j_2=j'_2, k_2=k'_2$ or $a_{j_2,k'_2}=a_{j'_2,k_2}=*$ holds; (2) either $j_1=j'_1, {\bf k}_1={\bf k}'_1$ or $p_{j_1,{\bf k}'_1}=p_{j'_1,{\bf k}_1}=*$ holds. Since $l_{(j_1,j_2),({\bf k_1},k_2)}$ and $l_{(j'_1,j'_2),({\bf k'_1},k'_2)}$ are two distinct entries, $j_2=j'_2, k_2=k'_2$ and $j_1=j'_1, {\bf k}_1={\bf k}'_1$ can not hold simultaneously, which implies that either $a_{j_2,k'_2}=a_{j'_2,k_2}=*$ or $p_{j_1,{\bf k}'_1}=p_{j'_1,{\bf k}_1}=*$ holds. Consequently, we have $l_{(j_1,j_2),({\bf k}'_1,k'_2)}=l_{(j'_1,j'_2),({\bf k}_1,k_2)}=*$ from \eqref{const-2}. The condition C$3$ of Definition \ref{def-PDA} holds.

For any $s\in[S]$, assume that $s$ appears $g_s$ times in $\mathbf{L}$, say $l_{(j_{1}^{(1)},j_2^{(1)}),({\bf k}_1^{(1)},k_2^{(1)})}=l_{(j_{1}^{(2)},j_2^{(2)}),({\bf k}_1^{(2)},k_2^{(2)})}=\ldots=l_{(j_{1}^{(g_s)},j_2^{(g_s)}),({\bf k}_1^{(g_s)},k_2^{(g_s)})}=s$, then $p_{j_{1}^{(1)},{\bf k}_1^{(1)}}=p_{j_{1}^{(2)},{\bf k}_1^{(2)}}=\ldots=p_{j_{1}^{(g_s)},{\bf k}_1^{(g_s)}}=\frac{s-<s>_{S_2}}{S_2}+1\triangleq s_1$ from \eqref{const-2}. Assume that ${\bf k}_1^{(\mu)}=(\mathcal{T}_{\mu},i_{\mu})$ for each $\mu\in[g_s]$, we have $\mathcal{I}_{s}^{(\mathbf{L})}=\mathcal{I}_{s_1}^{(\mathbf{P})}=\cap_{\mu\in[g_s]}\mathcal{T}_{\mu}$.
Since $\mathbf{P}$ is a CPDA, from condition C$4$ of Definition \ref{def-CPDA}, the set $\mathcal{I}_{s_1}^{(\mathbf{P})}$ is not empty. Then $\mathcal{I}_{s}^{(\mathbf{L})}$ is not empty and the condition C$4$ of Definition \ref{def-CPDA} holds. Hence, the array $\mathbf{L}$ is a $(K,F,Z,S)=(uK_2{H\choose r}, F_1F_2,Z_1F_2+(F_1-Z_1)Z_2,S_1S_2)$ CPDA.

Secondly, if $|\mathcal{I}_{s_1}^{(\mathbf{P})}|=\mu$ is a constant for each $s_1\in[S_1]$ and the number of $\mathcal{I}_{s_1}^{(\mathbf{P})}$ containing $h$ is a constant for each $h\in[H]$, say $|\{\mathcal{I}_{s_1}^{(\mathbf{P})}|h\in\mathcal{I}_{s_1}^{(\mathbf{P})},s_1\in[S_1]\}|=\nu$ for each $h\in[H]$, then we have $|\mathcal{I}_s^{(\mathbf{L})}|=|\mathcal{I}_{s_1}^{(\mathbf{P})}|=\mu$ for each $s\in[S_1S_2]$ where $s_1=\frac{s-<s>_{S_2}}{S_2}+1$ and $|\{\mathcal{I}_s^{(\mathbf{L})}|h\in\mathcal{I}_s^{(\mathbf{L})},s\in[S_1S_2]\}|=\nu S_2$ for each $h\in[H]$. If there are $Z'_1$ useless stars in each column of $\mathbf{P}$, there are $Z'_1F_2$ useless stars in each column of $\mathbf{L}$.
From Lemma \ref{le-coded CPDA}, $\mathbf{L}$ leads to an $(H,r,uK_2,M,N)$ multiaccess combination network caching scheme with memory ratio $\frac{M}{N}=\frac{Z_1F_2+(F_1-Z_1)Z_2-Z'_1F_2}{F_1F_2-Z'_1F_2}=\frac{Z_1-Z'_1}{F_1-Z'_1}+\left(1-\frac{Z_1-Z'_1}{F_1-Z'_1}\right)\frac{Z_2}{F_2}$, subpacketization $F=\mu(F_1-Z'_1)F_2$ and transmission load for each relay $R=\frac{S_1S_2}{H(F_1-Z'_1)F_2}$.
\end{proof}
\bibliographystyle{IEEEtran}
\bibliography{reference}

\end{document}